\newcommand{\cmark}{\ding{51}}%
\newcommand{\xmark}{\ding{55}}%
\newtheorem{thm}{Theorem}
\numberwithin{thm}{section} 
\newtheorem{rem}{Remark}
\newtheorem{lem}[thm]{Lemma}
\newtheorem{mydef}{Definition}
\numberwithin{mydef}{section} 
\newcommand{\beqn}{\begin{eqnarray}\begin{aligned}}
\newcommand{\eqn}{\end{aligned}\end{eqnarray}}
\newcommand{\fra}[2]{\textstyle{\frac{#1}{#2}}}
\def\ov{\overline}
\def\1{\ov{1}}
\def\2{\ov{2}}
\def\3{\ov{3}}
\definecolor{mygreen}{rgb}{0.2656,0.5039,0.2148}
\definecolor{myred}{rgb}{0.75,0,0.25}
\begin{document}
\title{Developing a statistically powerful measure for quartet tree inference using phylogenetic identities and Markov invariants}


%
\author{Jeremy G. Sumner\and Amelia Taylor \and  Barbara~R.~Holland \and Peter D. Jarvis}
\institute{J G Sumner, B R Holland, P D Jarvis \\\email{Jeremy.Sumner@utas.edu.au}
\at School of Physical Sciences, University of Tasmania, Australia
\and
A Taylor  %
\at Oregon State University, USA
}


\maketitle

\keywords{phylogenetic invariants; quartets; Markov chains; representation theory}
\vspace{1em}

\begin{abstract}
Recently there has been renewed interest in phylogenetic inference methods based on phylogenetic invariants, alongside the related Markov invariants. 
Broadly speaking, both these approaches give rise to polynomial functions of sequence site patterns that, in expectation value, either vanish  for particular evolutionary trees (in the case of phylogenetic invariants) or have well understood transformation properties (in the case of Markov invariants). 
While both approaches have been valued for their intrinsic mathematical interest, it is not clear how they relate to each other, and to what extent they can be used as practical tools for inference of phylogenetic trees.

In this paper, by focusing on the special case of binary sequence data and quartets of taxa, we are able to view these two different polynomial-based approaches within a common framework. 
To motivate the discussion, we present three desirable statistical properties that we argue any invariant-based phylogenetic method should satisfy: (1) sensible behaviour under reordering of input sequences; (2) stability as the taxa evolve independently according to a Markov process; and (3) explicit dependence on the assumption of a continuous-time process.
Motivated by these statistical properties, we develop and explore several new phylogenetic inference methods. 
In particular, we develop a statistically bias-corrected version of the Markov invariants approach which satisfies all three properties. 
We also extend previous work by showing that the phylogenetic invariants can be implemented in such a way as to satisfy property (3). 
A simulation study shows that, in comparison to other methods, our new proposed approach based on bias-corrected  Markov invariants is extremely powerful for phylogenetic inference.

The binary case is of particular theoretical interest as --– in this case only --- the Markov invariants can be expressed as linear combinations of the phylogenetic invariants. 
A wider implication of this is that, for models with more than two states --- for example DNA sequence alignments with four-state models --- we find that methods which rely on phylogenetic invariants are incapable of satisfying all three of the stated statistical properties. 
This is because in these cases the relevant Markov invariants belong to a class of polynomials independent from the phylogenetic invariants.

\end{abstract}

\section{Introduction and motivation}

In the late 1980s, Cavender and Felsenstein \cite{cavender1987} and Lake \cite{lake1987} introduced the idea of \emph{phylogenetic invariants}; a class of polynomials useful in the study of phylogenetic trees.   
In subsequent years, these polynomials have proven useful for studying analytical questions of identifiability \cite{allman2003} and for identifying local maximum likelihood optima \cite{chor2000}.
However, beginning with the earliest simulation studies~\cite{hillis1994}, there has been doubt as to the statistical effectiveness of phylogenetic invariants for inference of phylogenetic trees from data sets.

With their paper~\cite{allman2008}, Allman and Rhodes renewed interest in phylogenetic invariants. They took the point of view of algebraic geometry to give a comprehensive description of these polynomials and lay out several open questions (some of which have subsequently been solved \cite{allman2014semialgebraic,bates2011toward,friedland2013tensors,friedland2012proof}). 
Concurrently, Sumner and coauthors \cite{sumner2008} suggested an alternative perspective on algebraic methods as applied to phylogenetics. 
From this perspective, group representation theory (symmetries and transformations) takes center stage, leading to the study of a different set of polynomials of special interest, the \emph{Markov invariants}. 
In contrast to phylogenetic invariants, the definition of Markov invariants is detached from the notion of a phylogenetic tree; rather they are the polynomial invariants for the matrix group induced by the action of Markov matrices. 
As such, the application of Markov invariants to the context of phylogenetics comes only after consideration of the specific tree structures underlying phylogenetic models. 
In this vein, Sumner and Jarvis \cite{sumner2009} showed how leaf permutation symmetries on a quartet tree, for example, can be used to bring Markov invariants into phylogenetics proper; effectively showing there are phylogenetic invariants lurking within the ring of Markov invariants applicable to this case.  
Recently, both perspectives have been applied to inferring phylogenetic trees~\cite{casanellas2010,holland2012}, with further promising results appearing in \cite{fernandez2015}. 

Most likely due to the disjointed historical development of these polynomial functions, there is some confusion, already clear in the paragraph above, regarding the use of ``invariant'' as applied to both phylogenetic and Markov invariants.   
In the literature~\cite{allman2003,allman2008,cavender1987} ``phylogenetic invariant'' is used to refer to any polynomial which vanishes on all distributions arising from a subset of phylogenetic tree topologies (understood as leaf-labelled trees). 
If the subset is proper, the phylogenetic invariant is referred to as ``tree informative''.   
We, however, prefer to use ``invariant'' in the more mathematically traditional sense to mean invariant under an invertible transformation (c.f. classical invariant theory \cite{olver2003}). 
We argue that in the phylogenetic context, the relevant transformations are adjustments of model parameters and leaf permutations of trees.   
To avoid confusion, we follow Draisma and Kuttler~\cite{draisma2008} and refer to any polynomial which is useful for identifying tree topology as a \emph{phylogenetic identity}.  
In contrast, we say a polynomial is a \emph{Markov invariant} \cite{sumner2008} if the polynomial itself (rather than its particular value on subsets of  distributions) is invariant under adjustment of model parameters on a phylogenetic tree (the precise meaning of this distinction is made clear in Section~\ref{sec:background}).
Formally, these polynomials are  invariant  under a specific action of a group of invertible transformations (at least ``relatively'', that is, they may attract a transformation constant). 
Clearly distinguishing Markov invariants from phylogenetic identities is crucial in what follows. 



Given that phylogenetic identities arise solely from \emph{algebraic} conditions on phylogenetic probability distributions, we argue it is also essential to consider the \emph{statistical} structure of inference methods constructed using these polynomials more carefully than has previously appeared in the literature. 
Toward this end, we provide a comprehensive discussion, including both analytical and statistical arguments and a comparison of the algebraic geometry and representation theory perspectives, of using the phylogenetic identities for the inference of phylogenetic trees. 
To simplify the discussion, we focus on the most elementary case: quartet trees with a binary state space.  
We argue that the representation theoretic point of view and the ideas underlying Markov invariants provide significant  guidance as to how to construct statistically powerful methods of phylogenetic inference.  

Binary state spaces have long been of theoretical interest in the study of phylogenetic methods as the mathematical properties of two-state models are often more tractable, and yet the results are still illuminating about general phylogenetic principles. 
We also note that recently there has been increased interest in binary data from an applied point of view due to the widespread availability of bi-allelic single nucleotide polymorphism (SNP) datasets derived from modern genome-wide sequencing technologies \cite{davey2011genome,lemmon2013high}.

Our discussion is unified through two notions of symmetry that naturally arise in phylogenetics. 
In Section~\ref{sec:background} we develop these and refer to them as ``leaf symmetries'' and the ``Markov action''.
In Section~\ref{sec:quartetInference} we argue that any inference method that seeks to infer tree topology alone (as is typical of phylogenetic identity methods) should respect both of these symmetries.  
We show that respect for leaf permutation symmetry is something that can (and should) be imposed upon any tree inference method based on phylogenetic identities. 
Additionally, demanding the method respect the Markov action symmetry leads directly to the definition of Markov invariants, with our main example constructed in Section~\ref{sec:squangles}. 
An ideal situation arises in the quartet case: we show that imposing the leaf permutation symmetry upon the Markov invariants identifies a specific subset of phylogenetic identities, which in turn leads to a \emph{unique} choice of identities to apply to quartet tree inference. 

In Section~\ref{sec:edgeIdentities} we discuss the properties of the edge identities; especially in relation to the three statistical properties given in Section~\ref{sec:quartetInference}.
We provide a detailed examination of the behaviour of the edge identities under leaf permutation symmetries and, as for the squangles, derive semi-algebraic constraints for their behaviour under the assumption of a continuous-time Markov chain.

Along with our theoretical arguments for considering polynomials which respect these two symmetries, we also use these symmetries to develop a statistical decision rule for tree inference (via residual sums of squares). 
In Section~\ref{sec:simulations}, we provide simulation studies which illustrate both the practical importance of these ideas and that the naive application of phylogenetic identities (like that in~\cite{cavender1987}) can be statistically biased and not nearly as powerful as our approach motivated by the symmetries inherent to the problem.

In Section ~\ref{sec:generalize}, we conclude with a discussion of how these ideas apply directly to models with more than binary states, with specific results presented for the four state (DNA) case. 
In particular, we find that it is only in the binary case that the Markov invariants (squangles) lie in the same space of polynomials as the phylogenetic identities (edge identities). 
Thus in the case of models with greater than two states, the attractive transformation properties of the Markov invariants become a missed opportunity if one restricts attention to edge identities (as is advocated in \cite{casanellas2010}). 
This result is derived using representation theoretical techniques (particularly group characters \cite{jarvis2014adventures}) for which full derivations are provided in the Appendix (Online Resource 4).

\section{Background}
\label{sec:background}

In phylogenetic inference, the topology of the evolutionary tree is difficult to determine correctly and is often the unknown parameter which is the most biologically important. 
It is well known that it is enough to correctly identify all the quartet trees corresponding to all subsets of four taxa in order to determine the overall phylogenetic tree.
Thus correct identification of a single quartet topology remains a point of considerable mathematical interest, and is the focus of the work presented here.  

\begin{rem}
Throughout this paper we will exclusively consider phylogenetic quartet inference methods that, given aligned sequence data on four taxa as input, solely return confidence in each of the three possible quartet tree topologies. 
For methods (such as maximum likelihood) that usually also return estimates of evolutionary divergence times or other model parameters, we will consider the topology to be the only output.
\end{rem}

\subsection{Taxon permutations and leaf symmetries}

\label{sec:symmetries}

When discussing four general taxa, we label them $A,B,C,D$; and when we want to discuss a fixed order on the taxa we label them $1, 2, 3, 4$.  
This gives us a natural way to talk both about the three different quartet trees and equivalent trees using the common split notation.  
In this notation, the three distinct quartet trees are $T_1 \!=\! 12|34$, $T_2 \!=\! 13|24$ and $T_3 \!=\! 14|23$, where formally $ij|kl\equiv \{\{i,j\},\{k,l\}\}$ is a bipartition of the set $\{1,2,3,4\}$. 
Each quartet has symmetries under leaf permutations which are captured by the equalities $12|34\!=\!21|34 \!=\!34|12\ldots$ etc.  
 These different representations of the same quartet are of practical importance if one considers the application of a phylogenetic method (usually via some computer software) on the taxon set $\{A,B,C,D\}$ with output one of the quartets $T_1,T_2,T_3$. 
For instance, if the list of taxa in the ordering $A,B,C,D$ leads to $T_1$ we would expect the alternative input ordering $A,C,B,D$ to return the quartet $T_2$ (since $B$ now corresponds to 3, and $C$
to 2), and the alternative input order $D,C,A,B$ to also return $T_1$ via the correspondence $12|34\!=\!43|12$.  


Such changes in taxon ordering can be understood as the symmetric group $\mathfrak{S}_4$ permuting the four taxa in the natural way, thereby inducing permutations of the three possible quartet trees. 
For example, the taxon permutation $(13)\in \mathfrak{S}_4$ fixes $T_2$ and interchanges $T_1\leftrightarrow T_3$.
From the perspective of phylogenetic quartet inference, we account for this redundancy by considering the subgroup of $\mathfrak{S}_4$ that fixes a given quartet.  
For example, $T_1$ is invariant under the action of the subgroup of $\mathfrak{S}_4$ consisting of the permutations which we refer to as the  \emph{stabilizer of} $T_1$:
\[
\text{Stab}(T_1)=\{e,(12),(34),(12)(34),(13)(24),(14)(23),(1324),(1423)\}.
\]
It is an easy exercise to write down the stabilizer subgroups for $T_2$ and $T_3$.

To understand the importance of these observations, consider the ``black box'' view of a phylogenetic quartet method, where the black box (in the form of a computer program\footnote{It is sometimes important to distinguish between a method (as theoretically conceived) and its implementation in software (for example, ambiguities often arise in quartet methods in regard to random tie breaking). Throughout this article we will assume the two match up perfectly without further comment.}) takes an \emph{ordered} set of taxon sequences $A,B,C,D$, and returns one of the three possible quartets $T_1,T_2$ or $T_3$. 
To say that the method ``respects'' the permutation symmetries explained above is to demand that the method behaves in the appropriate way given a permutation of the input sequences  such as $B,A,C,D$ corresponding to $(12)$, or $C,D,A,B$ corresponding to the permutation $(13)(24)$.  
We ensure that the phylogenetic methods we develop in this paper respect these quartet tree leaf permutation symmetries.

\subsection{Tensors and group actions}
\label{sec:groupaction}

The data we consider are frequency arrays $F=\left(f_{ijkl}\right)$ arising from an alignment of four binary $\{0,1\}$ sequences, where $f_{ijkl}$ is the number of times we observe the pattern of states $i, j, k, l$ for sequence $1, 2, 3, 4$, respectively.

We model this data by assuming $F$ arises under multinomial (independent) sampling from a distribution $P=(p_{ijkl})$ which itself is constructed from  a binary Markov chain on a quartet tree, where $p_{ijkl}$ is the probability of observing the binary states $i,j,k,l\in \{0,1\}$ at the leaves $1,2,3,4$ of the tree, respectively. 
In the next section we discuss the construction of such $P$ in detail; for the moment we wish to consider the generic structural properties of $P$ irrespective of whether $P$ arises as a probability distribution on a tree or not.

Considering $P=(p_{ijkl})$ as a $2\times 2\times 2\times 2$ array of numbers, and taking $\{e_1=\left[\begin{smallmatrix}1\\0\end{smallmatrix}\right],e_2=\left[\begin{smallmatrix}0\\1\end{smallmatrix}\right]\}$ as a basis for $\mathbb{C}^2$, allows us to treat $P$ more formally as belonging to the $2^4=16$ dimensional tensor product space 
\[
U:= \mathbb{C}^2\otimes
\mathbb{C}^2\otimes\mathbb{C}^2\otimes\mathbb{C}^2=\left\{\sum_{i,j,k,l\in\{0,1\}}p_{ijkl}e_i\otimes e_j\otimes e_k\otimes e_l: p_{ijkl}\in \mathbb{C}\right\}.
\]
Of course, $P$ has all real and non-negative components so $P$ actually belongs to a stochastic subset of this space.
However, algebraically it is convenient to work over the complex numbers in what follows.  
When speaking abstractly we refer to a general member of $U$ as a \emph{tensor}, and when we want to emphasize that the components in the array should be considered as probabilities, we will refer to it as a \emph{distribution}.   

The taxon permutations discussed in the previous section act naturally on tensors $P\in U$ via permutation of the indices of $p_{ijkl}$.
To be concrete, suppose $\sigma \in \mathfrak{S}_4$ is a permutation, then we have the action $P\mapsto \sigma P$ defined via the coordinate transformation $p_{i_1i_2i_3i_4}\mapsto p_{i_{\sigma(1)}i_{\sigma(2)}i_{\sigma(3)}i_{\sigma(4)}}$.

Another key mathematical feature of working with tensor product spaces, essential to our derivations, is the natural action of the general linear group $\text{GL}(2)$ on each factor of the tensor product space, described as follows. 
Recall that $\text{GL}(2)$ is the group of $2\times 2$ invertible matrices with entries taken from $\mathbb{C}$, that is
\[
\text{GL}(2)=
\left\{
A=
\left[
\begin{matrix}
a_{11} & a_{12} \\
a_{21} & a_{22}
\end{matrix}
\right]
:
a_{11},a_{12},a_{21},a_{22}\in \mathbb{C},\det(A)\neq 0
\right\}.
\]
Recall also that $\text{GL}(2)$ acts on column vectors $v=[v_1,v_2]^T\in \mathbb{C}^2$ via $v\mapsto Av$ or, equivalently, in component form: $v_i\mapsto \sum_{i'\in \{0,1\}} a_{ii'}v_{i'}$.
This action extends to $U$ by taking four matrices $A,B,C,D\in \text{GL}(2)$ and defining an analogous rule for tensor component transformations: 
\[
p_{ijkl}\mapsto \sum_{ i',j',k',l'\in \{0,1\}} a_{ii'}b_{jj'}c_{kk'}d_{ll'}p_{i'j'k'l'}.
\]
This provides an action of the direct product group $\times^4 \text{GL}(2)\equiv \text{GL}(2)\!\times\! \text{GL}(2)\!\times\! \text{GL}(2) \!\times\!\text{GL}(2)$ expressed in tensor form as the mapping $P\mapsto A\otimes B\otimes C\otimes D\cdot P$.



In what follows, we consider the actions of both $\mathfrak{S}_4$ and $\times^4\text{GL}(2)$ on tensors $P\in U$.
For the former with $\sigma\in\mathfrak{S}_4$ we will generically write $P\mapsto \sigma\cdot P$, and for the latter with $g=A\otimes B\otimes C\otimes D\in \times^4\text{GL}(2)$ we will generically write $P\mapsto g\cdot P$.
Although in this notation there is ambiguity between which group action we are applying, we will resolve this in all cases by providing the necessary context.


%

\subsection{Tree tensors, clipped tensors}
\label{sec:markovgroup}

We will say that $M$ is a ($2\!\times\! 2$) \emph{Markov matrix} if
\[
M=
\left[
\begin{matrix}
1-a_{21} & a_{12} \\
a_{21} & 1-a_{12}
\end{matrix}
\right],
\]
where $0 \leq a_{12},a_{21} \leq 1$ are the probabilities of state changes $0\rightarrow 1$ and $1\rightarrow 0$, respectively.
We consider the \emph{rooted} version of the quartet tree $T_1$ obtained by placing an additional  vertex (the ``root'') on the internal edge of $T_1$. 
We label each edge of the tree by the subset of leaves descendant to the edge.
Let $\pi=[\pi_1,\pi_2]^T$ be a probability distribution (that is, $\pi_i> 0$ and $\pi_1+\pi_2=1$), and let $M_{e}=(m_{ij}^{(e)})$ be a collection of Markov matrices indexed by the edges $e\in T_1$\footnote{To avoid unimportant technicalities, we will assume ``generic'' parameter settings throughout this article. In particular, we assume that all Markov matrices are non-singular and $\pi_i\neq 0$ for $i=1,2$.}.
We set
\[
p^{(1)}_{ijkl}=\sum_{x,y,r\in \{0,1\}} m^{(1)}_{ix}m^{(2)}_{jx}m^{(3)}_{ky}m^{(4)}_{ly}m^{(12)}_{xr}m^{(34)}_{yr}\pi_r.
\]

Under this construction, the tensor $P_1=(p^{(1)}_{ijkl})$ corresponds to the standard construction of a probability distribution arising from the Markov process on $T_1$ (as described in textbooks such as~\cite{felsenstein2004}).
Additionally, a well-known result (a generalization of Felsenstein's ``pulley-principle'' \cite{felsenstein1981}) shows it is possible to adjust the free parameters in this expression such that we can move the root of $T_1$ to anywhere we please, whilst fixing the distribution $P_1$.
Motivated by this: 
\begin{mydef}
We say that a tensor $P_1$ is a \textbf{tree tensor corresponding to the quartet} $T_1=12|34$ if $P_1$ arises under the construction just given, for any choice of Markov matrices, root distribution, and root placement.
Similarly, we say that $P_2$ and $P_3$ are tree tensors corresponding to the quartets $T_2=13|24$ and $T_3=14|23$ if they arise in the analogous way on the remaining two quartets.
\end{mydef}

We now connect this construction to our description of the natural action of $\times^4\text{GL}(2)$ on $U$ described in the previous section.
We do this by defining, for any fixed tree tensor $P_i$, the \emph{clipped tensor} $\widetilde{P}_i$, which is obtained by setting
each Markov matrix on the leaf edges of the quartet to be equal to the identity matrix.
In this way, generically we have (for example):
\beqn
\label{eq:clipped}
\widetilde{p}^{(1)}_{ijkl}=
\left\{
\begin{matrix}
\sum_{r\in \{0,1\}} m^{(12)}_{ir}m^{(34)}_{kr}\pi_r,\text{ if }i=j\text{ and }k=l;\\
0, \text{ otherwise}.
\end{matrix}
\right.
\eqn

From the definitions given in the previous section, we can now write
\[
P_1=M_1\otimes M_2\otimes M_3\otimes M_4\cdot \widetilde{P}_1,
\]
and consider $P_1$ as arising from the clipped tensor $\widetilde{P}$ under the action of $\times^4\text{GL}(2)$ (provided we make the additional assumption that each of the Markov matrices $M_1,M_2,M_3, M_4$ is invertible and hence belongs to $\text{GL}(2)$).
This motivates:

\begin{mydef}
The \textbf{Markov group} $\mathcal{M}_2$ is the set of matrices:
\[
\mathcal{M}_2=
\left\{
M=
\left[
\begin{matrix}
1-a_{21} & a_{12} \\
a_{21} & 1-a_{12}
\end{matrix}
\right]:
a_{12},a_{21}\in \mathbb{C}, \det(M)\neq 0 
\right\}.
\]
\end{mydef}
Notice we have removed the stochastic constraints on the matrix entries so that $\mathcal{M}_2$ is a proper subgroup of $\text{GL}(2)$ (as is easy to verify).

While this perspective excludes tree tensors constructed using non-invertible Markov matrices, this is not a serious objection since, from a modelling perspective, we prefer to take the point of view of
continuous-time Markov chains where all relevant Markov matrices are invertible (since they occur as matrix exponentials). 
In any case, within the set of Markov matrices the subset with zero determinant is of measure zero and hence we may assume that any Markov matrix occurring in practice (in a sufficiently random way) will indeed belong to $\mathcal{M}_2$.
Thus we may consider tree tensors $P_i$ as arising under the action of $\times^4 \mathcal{M}_2$, as a subgroup of $\times^4 \text{GL}(2)$, on clipped tensors $\widetilde{P}_i$.

\subsection{Markov action}\label{sec:markovAction}


Conceptually, we can extend the notion of the action of $\times^4\mathcal{M}_2$ on clipped tensors $\widetilde{P}_i$ to an action on \emph{all}  tensors in $U$.
Of particular importance is the following: if $P_1\in U$ is a tree tensor and $M_1,M_2,M_3,M_4\in \mathcal{M}_2$ are Markov matrices, we can interpret the action 
\[
P_1\mapsto M_1\otimes M_2\otimes M_3\otimes M_4\cdot P_1
\] 
as corresponding to lengthening the leaves of the phylogenetic tree. 
Of course this interpretation works for any tensor $P\in U$ (whether $P$ is a tree tensor or otherwise).
\begin{mydef}
\label{def:markovaction}
The \textbf{Markov action} is the group action of $\times^4\mathcal{M}_2$ on $U$ obtained by restricting each copy of $\text{GL}(2)$ in $\times^4\text{GL}(2)$ to the Markov group $\mathcal{M}_2$.
\end{mydef}


Importantly, this action encodes the conditional independence of Markov evolution across lineages; and, if $P$ happens to be a tree tensor, \emph{this action preserves the underlying tree topology}.
In other words, the Markov action provides a symmetry on the set of quartet tree tensors.
Connecting this with our previously discussed black box view, where a quartet method is assumed to estimate tree topology only, we see that the Markov action is essentially a nuisance parameter that ideally the method should be insensitive to.

\subsection{Markov invariants}

With the Markov action in hand we can now formally define the polynomials that are our main interest in this paper.
This class of polynomials was first defined and explored in \cite{sumner2008}.

\begin{mydef}
\label{def:markovinvariants}
Take $q(P)$ to be a multivariate polynomial function on the indeterminates $P=(p_{ijkl})$.
We say that $q$ is a \textbf{Markov invariant} if $q$ transforms as a one-dimensional representation under the Markov action.
\end{mydef}
In the language of classical invariant theory, this  is equivalent to saying $q$ is a ``relative invariant'' under the Markov action so, for all $P\in U$ and all $g=M_1\otimes M_2\otimes M_3\otimes M_4\in \times^4\mathcal{M}_2$:
\[
q(g\cdot P)=q(M_1\otimes M_2\otimes M_3\otimes M_4\cdot P)=\lambda_gq(P),
\]
where $\lambda_g\in \mathbb{C}$ satisfies, for all $g,g'\in\times^4 \mathcal{M}_2$, the multiplicative property: $\lambda_{gg'}=\lambda_g\lambda_{g'}$.
In the language of group representation theory, this means that $\lambda_g$ provides a one-dimensional representation of $\times^4 \mathcal{M}_2$.
In the examples we discuss, $\lambda_g$ is simply a power of the determinant $\det(g)$ (from which the multiplicative property follows easily).

As alluded to in the previous section, our interest in Markov invariants is motivated by the desire to control the behaviour, under the Markov action, of a quartet phylogenetic method founded on the evaluation of a set of polynomials.
The Markov invariants represent the optimal case where we have complete understanding of what is happening under the Markov action.
As we will see, the situation is quite different for the classically constructed phylogenetic identities.


\subsection{Flattenings, minors, and edge identities}
\label{sec:flattenings}

Here we derive the so-called ``edge identities''.
In their most general form, these are phylogenetic identities for phylogenetic trees, which can be used to detect the presence or absence of a particular edge in the phylogenetic tree.
These identities were first derived using the general concepts of tensor flattenings and associated rank conditions in \cite{allman2008}.
Here we specialize to the case of binary states and quartet trees and take an approach which focuses on the role of the Markov action.

\begin{mydef}
Suppose $P=(p_{i_1i_2i_3i_4})\in U$ is a generic tensor and suppose $\alpha \beta|\gamma\delta$ is a bipartition of $\{1,2,3,4\}$.
The \textbf{flattening} of $P$ corresponding to the bipartition $\alpha \beta|\gamma\delta$ is the $2^2\times 2^2$ matrix containing the entries $p_{i_1i_2i_3i_4}$ with rows indexed by $ i_\alpha i_\beta=00,01,10,11 $ and columns indexed by $ i_\gamma i_\delta=00,01,10,11$.  
\end{mydef}
Up to row and column permutations, there are only three distinct flattenings of a tensor $P\in U$, each corresponding to one of the possible quartet trees $T_1,T_2$ or $T_3$. 
Concretely, we denote the ``$12|34$'' flattening of $P$ as the $4\times 4$ matrix $\text{Flat}_1(P)$ with entries
\[
{\text{Flat}_1(P)}_{i_1i_2,i_3i_4}=p_{i_1i_2i_3i_4}.
\]
Similarly we define the ``$13|24$'' and  ``$14|23$'' flattenings as the $4\times 4$ matrices $\text{Flat}_2(P)$ and $\text{Flat}_3(P)$ with entries
\[
{\text{Flat}_2(P)}_{i_1i_3,i_2i_4}= p_{i_1i_2i_3i_4},\qquad
{\text{Flat}_3(P)}_{i_1i_4,i_2i_3}= p_{i_1i_2i_3i_4},
\]
respectively.

The action of $\times^4\text{GL}(2)$ discussed in Section~\ref{sec:groupaction}, $P\rightarrow A\otimes B\otimes C\otimes D \cdot P$, can be shown to be expressed on the $12|34$ flattening as
\beqn
\label{eq:flataction}
\text{Flat}_1(P)\rightarrow (A\otimes B)\cdot \text{Flat}_1(P) \cdot (C\otimes D)^T, 
\eqn
where ${}^T$ indicates matrix transpose\footnote{At a formal level (not strictly required here), the reader should note that since we are working over the complex field, the flattening should be defined so in place of the matrix transpose  in (\ref{eq:flataction}) we have the \emph{conjugate} transpose operation.}.


Using the flattenings, it is not too hard to derive some phylogenetic identities for quartet trees.
Consider a clipped tensor $\widetilde{P_1}$ from the quartet tree $T_1$ and its flattening
\beqn
\label{eq:rank2}
\text{Flat}_1(\widetilde{P}_1)=
\left[
\begin{matrix}
x & 0 & 0 & y \\
0 & 0 & 0 & 0\\
0 & 0 & 0 & 0\\
z & 0 & 0 & w 
\end{matrix}
\right],
\eqn
where the $x,y,z,w$ label the non-zero probabilities given in (\ref{eq:clipped}).
From (\ref{eq:flataction}) we see that
\[
\text{Flat}_1(P_1)=M_1\otimes M_2 \cdot \text{Flat}_1(\widetilde{P}_1)\cdot (M_3\otimes M_4)^T,
\]
and hence, assuming each $M_i\in\mathcal{M}_2$ is non-singular, we conclude that $\text{rank}\left(\text{Flat}_1(P_1)\right)\leq 2$.
Therefore the 16 cubic polynomials obtained by taking 3-minors of this flattened matrix form a set of phylogenetic identities for the quartet $12|34$.
We refer to these minors as \emph{edge identities}.
(We will see in Section~\ref{sec:squangles} that these minors are actually tree-informative since they \emph{do not} vanish on the other two quartets, at least generically.)

These observations generalize to:
\begin{thm}\cite{allman2008}
In each case $i\!=\!1,2,3$; the 16 polynomial functions in the indeterminates $P_i=(p^{(i)}_{i_1i_2i_3i_4})$  produced by taking the cubic $3$-minors of the flattening $\text{Flat}_i(P_i)$ form phylogenetic identities for probability distributions $P_i$ arising from the quartet tree $T_i$. 
\end{thm}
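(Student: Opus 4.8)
The plan is to follow exactly the structure of the worked case $i=1$ presented just above the statement, and simply observe that the same argument is symmetric under relabelling. First I would note that for each $i$ it suffices to show that $\text{rank}\bigl(\text{Flat}_i(P_i)\bigr)\leq 2$ for every tree tensor $P_i$ arising from the quartet $T_i$, since the vanishing of all $3\times 3$ minors of a $4\times 4$ matrix is precisely equivalent to its rank being at most $2$, and these minors are the claimed polynomials.

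To establish the rank bound I would argue as follows. Fix $i$ and let $P_i$ be a tree tensor on $T_i$. By definition $P_i$ is built from Markov matrices and a root distribution; placing the root on the internal edge and clipping the leaf edges to the identity gives the clipped tensor $\widetilde{P}_i$, and by the construction in Section~\ref{sec:markovgroup} we have $P_i = M_a\otimes M_b\otimes M_c\otimes M_d\cdot \widetilde{P}_i$ where $\{a,b\}\,|\,\{c,d\}$ is the split defining $T_i$ (so $\{a,b,c,d\}=\{1,2,3,4\}$). Using the flattening form of the $\times^4\text{GL}(2)$ action, analogous to (\ref{eq:flataction}) but with the rows indexed by $i_ai_b$ and columns by $i_ci_d$, this becomes
\[
\text{Flat}_i(P_i) = (M_a\otimes M_b)\cdot \text{Flat}_i(\widetilde{P}_i)\cdot (M_c\otimes M_d)^T .
\]
Exactly as in (\ref{eq:rank2}), the clipped flattening $\text{Flat}_i(\widetilde{P}_i)$ has nonzero entries only in the four positions indexed by $i_a=i_b$ and $i_c=i_d$, so it is a $4\times 4$ matrix with at most four nonzero entries arranged in a $2\times 2$ pattern; hence its rank is at most $2$. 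Since each $M$ is assumed non-singular (so lies in $\mathcal{M}_2\subset\text{GL}(2)$ under our genericity assumption), $M_a\otimes M_b$ and $M_c\otimes M_d$ are invertible, and multiplication by invertible matrices does not change rank. Therefore $\text{rank}\bigl(\text{Flat}_i(P_i)\bigr)\leq 2$, so all its $3$-minors vanish, and by multinomial normalisation the same holds for the corresponding probability distributions $P_i$.

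The only genuinely nontrivial point is the flattening identity itself, i.e.\ that the coordinate-wise $\times^4\text{GL}(2)$ action regroups into the claimed left/right matrix multiplication on $\text{Flat}_i$; but for $i=1$ this is stated as (\ref{eq:flataction}) and for $i=2,3$ it follows by applying the leaf permutation $\sigma\in\mathfrak{S}_4$ that carries the split $12|34$ to the split of $T_i$ and tracking how it permutes the row/column index pairs. So I do not expect a serious obstacle: the ``main'' work is really just verifying that the choice of root placement and the pulley-principle freedom in the Definition of tree tensor do not affect the conclusion, which is immediate since the rank bound holds for every choice. One should also remark, as the authors do parenthetically, that these minors are not identically zero on the \emph{other} two quartets --- but that is a separate (tree-informativeness) claim deferred to Section~\ref{sec:squangles} and not part of what must be proved here.
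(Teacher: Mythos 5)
Your proposal is correct and follows essentially the same route as the paper: write $P_i$ as the image of the clipped tensor under the leaf action, observe that the clipped flattening has rank at most $2$ as in (\ref{eq:rank2}), transfer this through the flattening form of the action (\ref{eq:flataction}) using invertibility of the Markov matrices, and conclude that all $3$-minors vanish. The paper itself only works the case $i=1$ explicitly and states that "these observations generalize," which is exactly the relabelling argument you spell out.
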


An attractive feature of this process of taking flattenings and minors is that the construction can be generalized to phylogenetic tensors with any number of taxa, and Markov chains with arbitrary state spaces (beyond the binary case discussed here).
This observation was first presented in \cite{allman2008} and generalized to a wider class of models in \cite{draisma2008} and \cite{casanellas2010}.


\section{Quartet inference measures}~\label{sec:quartetInference}

We now describe some desirable properties of any quartet method which returns tree topology only.
We suppose the pattern frequency array $F=(f_{ijkl})\in U$ for four taxa arose as $N$ independent samples from some fixed distribution $P\in U$.
(In particular one may like to consider the case where $P=P_i$ arose on the tree $T_i$, but this is not necessary for the discussion in this section.)
We interpret $N$ as sequence length of the alignment, and denote this situation as $F\sim \text{MultiNom}(P,N)$, noting this implies $F$ has (componentwise) expectation value $E[F]=NP$.

\begin{mydef}\label{def:measure}
A triple $\Delta(F)=(R_1,R_2,R_3)$ is called a \textbf{quartet inference measure} (or simply a \textbf{measure}) for $F$ if  each $R_1$,$R_2$,$R_3$ is a (statistically interpretable)  confidence in the respective statements $F \sim \text{MultiNom}(P_1,N)$, $F \sim \text{MultiNom}(P_2,N)$, $F \sim \text{MultiNom}(P_3,N)$, for some $P_1$,$P_2$,$P_3$ arising in turn from the quartets $T_1$,$T_2$,$T_3$.
\end{mydef}

Later, we set each $R_i$ equal to a residual sum of squares under the quartet hypothesis $T_i$, but for the moment we assume, without loss of generality, that $\Delta$ is designed so that small values of $R_i$ correspond to greater confidence in quartet $T_i$.
Given this, we assume the quartet inference measure $\Delta$ ranks the statistical confidence in the three quartet trees $T_1$, $T_2$ and $T_3$ using the relative ordering of  $R_1$, $R_2$ and $R_3$.

Considering quartet inference measures $\Delta$ in the abstract sense, in Table~\ref{tab:properties} we describe three theoretical statistical properties a measure may, or may not, satisfy.
On the practical side, in the simulation study (Section~\ref{sec:simulations}), we apply several specific examples of quartet measures $\Delta$ constructed from polynomial functions (both phylogenetic identities and Markov invariants) on the tensor product space $U$.
The results of the simulations clearly establish the importance of each of the properties given in Table~\ref{tab:properties}.

Presently, we illustrate the three properties by showing:

\begin{table}[ht]
\caption{Proposed desirable statistical properties of quartet inference measures.}
\begin{tabular}{l}

\fcolorbox{black}{lightgray}{
\begin{minipage}{.9\textwidth}
\smallskip
\noindent
\textbf{Property I}

\smallskip
\emph{The quartet inference measure $\Delta(F)$ should satisfy an explicit transformation rule under taxon permutations.}
\medskip

\noindent
In detail, this means that if $\Delta(F)=(R_1,R_2,R_3)$ and we permute the taxa so $F'\!=\!\sigma\cdot  F$ with $\sigma \in \mathfrak{S}_4$, then 
\[
\Delta(F')=(R_{\sigma'(1)},R_{\sigma'(2)},R_{\sigma'(3)}),
\]
where $\sigma\mapsto \sigma'\in \mathfrak{S}_3$ is the homomorphism induced by considering taxon permutations as leaf permutations on quartet trees (as discussed in Section~\ref{sec:symmetries}).
\end{minipage}
}

\\
\\

\fcolorbox{black}{lightgray}{
\begin{minipage}{.9\textwidth}
\smallskip
\noindent
\textbf{Property II (strong version)}

\smallskip 
\emph{In expectation value, the quartet inference measure $\Delta$ should satisfy an explicit transformation rule under the Markov action.}
\medskip

\noindent
For instance, if $F\sim \text{MultiNom}(P,N)$,  $g\in\times^4\mathcal{M}_2$, and $F'\sim \text{MultiNom}(g\cdot P,N)$ there exists a scalar $\lambda_g$ such that:
\[
E[\Delta(F')]= \lambda_g E[\Delta(F)],
\]
where $\lambda_g$ satisfies the \emph{multiplicative} group homomorphism property $\lambda_{g}\lambda_{g'}=\lambda_{gg'}$ for all $g,g'\times^4\mathcal{M}_2$.

\smallskip
Alternatively, $\lambda_g$ may satisfy the \emph{additive} group homomorphism property $\lambda_{g}+\lambda_{g'}\!=\!\lambda_{gg'}$ so that $E[\Delta(F')]=  E[\Delta(F)]+(\lambda_g,\lambda_g,\lambda_g)$.
\end{minipage}
}

\\
\\

\fcolorbox{black}{lightgray}{
\begin{minipage}{.9\textwidth}
\smallskip
\noindent
\textbf{Property II (weak version)} 

\smallskip
\emph{In expectation value and in the limit of infinite sequence length, the quartet inference measure $\Delta(F)$ should satisfy an explicit transformation rule under the Markov action.}
\smallskip

\noindent
For instance, as in the strong version but with equality true in the limit of infinite sequence length $N$ (assuming the multiplicative property for $\lambda_g$):
\[
\lim_{{N\rightarrow \infty} }E[\Delta(F')]=\lim_{{N\rightarrow \infty} }\lambda_g E[\Delta(F)].
\]
\end{minipage}
}

\\
\\

\fcolorbox{black}{lightgray}{
\begin{minipage}{.9\textwidth}
\smallskip
\noindent
\textbf{Property III} 

\smallskip
\emph{The quartet inference measure $\Delta(F)$ should be explicitly dependent on the assumption of a continuous-time process.}

\noindent

\end{minipage}
}
\smallskip

\end{tabular}
\label{tab:properties}
\end{table}

\begin{thm}
\label{thm:nj}
The neighbor-joining algorithm \cite{saitou1987} together with an additive estimator of pairwise distance consistent with a fixed Markov model  provides a quartet inference measure satisfying Property I, Property II (strong), and Property III. 
\end{thm}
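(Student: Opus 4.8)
The plan is to verify each of the three properties in turn for the neighbor-joining (NJ) pipeline, exploiting the fact that all three reduce to structural facts about the matrix of pairwise distances $D = (d_{ij})$ that NJ consumes as input.

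\textbf{Setup.} First I would fix notation: from the frequency array $F$ we extract, for each pair of taxa $\{i,j\}$, the marginal $2\times 2$ pattern-count matrix $F_{ij}$, and we apply a fixed ``additive'' distance estimator $\widehat{d}_{ij} = \phi(F_{ij})$ — meaning $\phi$ is chosen so that, in expectation value under the fixed Markov model, $\widehat{d}_{ij}$ converges to an additive tree metric (this is the standard log-det / paralinear or model-corrected distance; for a fixed Markov model such $\phi$ exists by the usual invertibility of the substitution map on generic parameters). NJ applied to the resulting $4\times 4$ matrix $D=(\widehat d_{ij})$ returns, for quartets, the split minimizing the appropriate four-point criterion; equivalently, following the Remark, I would take $R_i$ to be (a monotone function of) the value of the Buneman-type four-point sum $R_1 = \widehat d_{12}+\widehat d_{34}$, $R_2 = \widehat d_{13}+\widehat d_{24}$, $R_3 = \widehat d_{14}+\widehat d_{23}$, so that small $R_i$ corresponds to confidence in $T_i$, exactly matching Definition~\ref{def:measure}.

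\textbf{Property I.} A taxon permutation $\sigma\in\mathfrak S_4$ acts on $F$ by permuting indices, hence acts on the collection $\{F_{ij}\}$ by $F_{ij}\mapsto F_{\sigma^{-1}(i)\sigma^{-1}(j)}$, so $\widehat d_{ij}\mapsto \widehat d_{\sigma^{-1}(i)\sigma^{-1}(j)}$ (using that $\phi$ is applied identically to every pair). Then the three sums $R_1,R_2,R_3$ are permuted among themselves precisely according to how $\sigma$ permutes the three bipartitions $12|34$, $13|24$, $14|23$ — which is the definition of the homomorphism $\sigma\mapsto\sigma'\in\mathfrak S_3$ from Section~\ref{sec:symmetries}. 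Hence $\Delta(\sigma\cdot F)=(R_{\sigma'(1)},R_{\sigma'(2)},R_{\sigma'(3)})$, which is Property I. This step is essentially bookkeeping.

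\textbf{Property II (strong) and Property III.} These are the substantive steps and where the main obstacle lies. For an element $g=M_1\otimes M_2\otimes M_3\otimes M_4\in\times^4\mathcal M_2$ acting on a tree tensor $P_i$ — interpreted (Section~\ref{sec:markovAction}) as lengthening the pendant edges — the key fact I would establish is that the correctly model-matched additive distance transforms additively on each pendant edge: in expectation value $\widehat d_{ij}$ picks up a term depending only on $M_i$ plus a term depending only on $M_j$, i.e. $E[\widehat d_{ij}] \mapsto E[\widehat d_{ij}] + \ell(M_i) + \ell(M_j)$ for a suitable branch-length functional $\ell$ (for the log-det distance this is $\ell(M)=-\tfrac12\log\det M$ up to the model-dependent normalization). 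Summing over the two pairs in any bipartition, each $R_k$ then picks up the same constant $\ell(M_1)+\ell(M_2)+\ell(M_3)+\ell(M_4)$, giving $E[\Delta(F')] = E[\Delta(F)] + (\lambda_g,\lambda_g,\lambda_g)$ with $\lambda_g=\sum_i\ell(M_i)$ — exactly the additive form of Property II (strong), and the additivity $\lambda_{gg'}=\lambda_g+\lambda_{g'}$ follows from multiplicativity of $\det$ (or the semigroup property of the relevant log-distance). Property III is then immediate from the same computation: the construction of $\widehat d_{ij}$ (and hence of the $R_k$) is tied to a specific Markov model — the ``fixed Markov model'' the estimator is consistent with — and the additivity argument only goes through because the pendant-edge matrices $M_i$, being matrix exponentials, are invertible and lie in the continuous-time sub-semigroup; I would remark that without the continuous-time assumption the log-det branch functional need not be well-defined or additive, so the measure genuinely depends on it. The main obstacle is making the additivity of $E[\widehat d_{ij}]$ under the Markov action precise for a general ``fixed Markov model'' rather than just the symmetric (Jukes–Cantor-like) case — one must be careful that the chosen $\phi$ is the model-matched estimator so that the stationary-distribution and rate-normalization factors cancel correctly across the sum; I would handle this by invoking the standard theory of additive/paralinear distances for the model in question and noting that genericity of parameters (as assumed throughout the paper) ensures $\phi$ is well-defined.
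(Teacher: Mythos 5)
Your proof follows essentially the same route as the paper's own (which is itself only an outline): the same decomposition into four-point sums $R_k=\widehat d_{ij}+\widehat d_{kl}$, the same observation that the Markov action adds one common constant $\lambda_g=\sum_i\ell(M_i)$ to each $R_k$ in expectation (giving the additive form of Property II with $\lambda_{gg'}=\lambda_g+\lambda_{g'}$), and the same point that Property III is built into the model-matched distance estimator. The one caution is that Property II in its strong form requires the estimator to be \emph{exactly} unbiased for path length (as the paper's accompanying note assumes via $E[d_{ij}]$ equalling the sum of branch lengths), not merely asymptotically consistent as your setup's phrase ``converges to'' might suggest --- but since your Property II computation already works with exact expectation values, the argument stands as written.
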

\noindent
Note: Supposing  the pairwise distance estimator between taxa $i$ and $j$ input to neighbor-joining is denoted as $d_{ij}$.
By ``additive'' and ``consistent with a given Markov model'' we mean the following:
\begin{enumerate}
\item A specific continuous-time Markov model on quartet trees is fixed;
\item The associated Markov matrices produce a matrix group \cite{sumner2011} so the ``Markov action'' on the leaves is well defined (as in Definition~\ref{def:markovaction}) ;
\item The expectation value $E[d_{ij}]$ is equal to the sum of the branch lengths on the path from leaf $i$ to $j$ . 
\end{enumerate}
Examples of Markov models where these conditions can be achieved include the binary-symmetric and Jukes-Cantor models, together with their unbiased pairwise distance estimators (see, for example, \cite{felsenstein2004}).
In the following we give an outline of a proof.
\begin{proof}
For quartets, the neighbor-joining algorithm returns the quartet corresponding to the minimum of the three-tuple $\Delta=(R_1,R_2,R_3):=(d_{12}+d_{34},d_{13}+d_{24},d_{14}+d_{23})$.
Under this definition, it is clear that $\Delta$ satisfies Property I, as required.

Further, if each $d_{ij}$ is additive and consistent with a Markov model on the tree (as described above), then under the Markov action it follows that $E[R_i]\rightarrow E[R_i]+\lambda_g$, where $\lambda_g:=t_1'+t_2'+t_3'+t_4'$ and each $t_i'$ is the extended branch length on leaf $i$ of the quartet. 
Setting $\Delta(F)=(R_1,R_2,R_3)$  we have, under the leaf action: $E[\Delta(F')]=E[\Delta(F)]+(\lambda_g,\lambda_g,\lambda_g)$, and $\lambda_g+\lambda_{g'}=\lambda_{gg'}$ (since the branch lengths are additive under further extension of the leaves).
This establishes that, under these conditions, neighbor-joining satisfies Property II (strong), as required.

Finally, Property III is built into our assumption on the pairwise distance estimator.
(For example, the Jukes-Cantor distance estimator for DNA sequences will fail to return a finite answer when the proportion of sites that differ in a pairwise sequence alignment is greater than $0.75$; this is a structural feature resulting from a continuous-time assumption.)
\hfill $\square$
\end{proof}

Thinking in a continuous-time formulation of Markov chains, in general one would expect that such $\lambda_g$ would have some monotonicity property with respect to time so that, up to a fixed amount of statistical noise, our ability to discriminate quartets using a measure $\Delta$ satisfying Property II decreases in time.
This is indeed the case for the example of neighbor-joining just given, and more generally corresponds to the biological fact that the ability to detect homology between extant taxa (that is, the ``phylogenetic signal'') degrades as the divergence of common ancestry is pushed further backwards in time.
In our application of Markov invariants, we will see that this is also the case where $\lambda_g$ is multiplicative and $\lambda_g\sim e^{-\gamma t}$, with $\gamma>0$.

Previous work has discussed applying Property I \cite{eriksson2008,sumner2009,rusinko2011} in the context of phylogenetic identities.
To our knowledge, Property II has never been explicitly discussed before.
We will however show in Section~\ref{sec:squangles} that Property II (weak) is implicit in the quartet method based on Markov invariants presented in \cite{holland2012}.

We are convinced that these properties of a quartet measure $\Delta$ are natural given that the purpose of $\Delta$ is to deliver confidence in the choice of quartet from observed data.
We will explain how the Markov invariants are ideally tailored to the task of constructing quartet measures that satisfy Property II in its strong version.
As we will see, this is contingent upon the construction of unbiased estimators of the Markov invariants; a problem we solve completely in the binary quartet case, but is otherwise open (see Section~\ref{sec:generalize}).

The next two sections contain the derivations of Markov invariants and the related discussion of Properties I and II.  


\section{The squangles}
\label{sec:squangles}
As previously noted in Section~\ref{sec:markovAction}, whether a phylogenetic pattern distribution $F$ arises as a sample from a specific quartet $T_i$ depends only on the \emph{internal} structure of the tree, not on the lengths or model parameters on the leaf edges. 
This motivates Definition~\ref{def:markovinvariants} of Markov invariants, which for historical reasons in the quartet case on four-state, DNA models, we call ``squangles'' (\textbf{s}tochastic \textbf{q}uartet t\textbf{angle}, see \cite{sumner2008}). 
We work with an analogous construction in the binary case and, when needed, refer to these polynomials as ``binary squangles" or, whenever there is no risk of ambiguity, simply as ``squangles''.  

In this section, we first derive the (binary) squangles, then use them to build a quartet measure $\Delta$ which satisfies Properties I, II (weak), and III.  
We then consider issues of statistical bias to build a second measure that satisfies Properties I, II (strong), and III.  

\subsection{Construction}
\label{subsec:construct}
To motivate and construct the squangles, we use an alternative basis for $\mathbb{C}^2$.
Our choice of basis is motivated by the simple observation that a linear change of coordinates on the probability vectors $[p_0,p_1]^T$ makes probability conservation, $p_0+p_1\!=\!1$ an explicitly conserved quantity under the action of Markov matrices $\mathcal{M}_2$.

To this end, we use the orthogonal similarity transformation 
$h=\frac{1}{\sqrt{2}}\left[
\begin{smallmatrix}
1 & 1 \\ 
-1 & 1
 \end{smallmatrix}\right]$ with inverse $h^{-1}\!=\!h^T$, so that $2\times 2$ Markov matrices
$M=\left[\begin{smallmatrix}1-a & b \\ a &
    1-b\end{smallmatrix}\right]$ are transformed to 
$
M'=h^TMh=\left[
\begin{smallmatrix}
\lambda & v \\
0 & 1
\end{smallmatrix}
\right]
$,
where $\lambda\!=\!1\!-\!a\!-\!b$ and $v\!=\!b\!-\!a$, and the second row explicitly manifests probability conservation. 
In what is to come, we will have additional recourse to consider only parameters that arise under a continuous-time formulation of a Markov chain, so that $M=e^{Qt}$ for some $2\times 2$ ``rate'' (zero-column sum) matrix $Q$. 
In this case we have the constraints $0\leq a,b< \frac{1}{2}$ which, in particular, implies $0< \lambda\leq 1$.

Let $P\in U$ be a distribution with components $p_{ijkl}$. 
Following the notation set out in Section~\ref{sec:flattenings} we let $\text{Flat}_1(P)$ be the $12|34$ flattening of $P$, which under the Markov action transforms as
\[
\text{Flat}_1(P)\rightarrow (M_1\otimes M_2)\cdot \text{Flat}_1(P)\cdot (M_3\otimes M_4)^T.
\]
In the alternative basis we have the $4\times 4$ form
\[
M'_1\otimes M'_2=
\left[
\begin{matrix}
\lambda_1\lambda_2 & \lambda_1v_2 & v_1\lambda_2 & v_1v_2 \\
0 & \lambda_1 & 0 & v_1 \\
0 & 0 & \lambda_2 & v_2 \\
0 & 0 & 0 & 1
\end{matrix}
\right],
\]
and a similar expression for $M'_3\otimes M'_4$.
Commensurately, we let $\text{Flat}_1'(P)$ denote the $12|34$ flattening in the alternate basis:
\[
\text{Flat}_1'(P):=\left(h^T\otimes h^T\right)\cdot \text{Flat}_1(P)\cdot \left(h\otimes h\right).
\]

This formulation allows us to identify the bottom right $3\times 3$ sub-matrix $\widehat{\text{Flat}}'_1(P)$ of $\text{Flat}'_1(P)$ as providing an invariant subspace for the Markov action, that is 
\[
\widehat{\text{Flat}}'_1(P)\rightarrow \widehat{(M'_1\otimes M'_2)}\cdot \widehat{\text{Flat}}'_1(P)\cdot \widehat{(M'_3\otimes M'_4)}^T,
\]
where\footnote{This observation admits a significant generalization --- presented in \cite{sumner2016dimensional} --- to any number of taxa and any number of states $k$.} 
\[
\widehat{M'_1\otimes M'_2}:=
\left[
\begin{matrix}
\lambda_1 & 0 & v_1 \\
0 & \lambda_2 & v_2 \\
0 & 0 & 1
\end{matrix}
\right],
\] 
and similarly for $\widehat{M'_3\otimes M'_4}$.

Further, this construction leads to a cubic Markov invariant using nothing more than the multiplicative property of the determinant: 
\beqn
\label{eq:sqprop}
\det(\widehat{\text{Flat}}'_1(P))\rightarrow \det&\left(\widehat{(M'_1\otimes M'_2)}\cdot \widehat{\text{Flat}}'_1(P)\cdot (\widehat{M'_3\otimes M'_4})^T\right)\\
&\hspace{8em}=\det(\widehat{M'_1\otimes M'_2})\det(\widehat{\text{Flat}}_1'(P))\det(\widehat{M'_3\otimes M'_4})\\
&\hspace{8em}=\lambda_1\lambda_2\lambda_3\lambda_4\det(\widehat{\text{Flat}}'_1(P)).
\eqn
As a polynomial on $U$, we set $q_1(P):=\det(\widehat{\text{Flat}}'_1(P))$ and $q_1$ is our first example of a Markov invariant on the space of tensors $U$ since, for all $P\in U$ and $g\!=\!M_1\otimes M_2\otimes M_3\otimes M_4\in \times^4\mathcal{M}_2$, we have:
\[
q_1(g\cdot P)=\det(g)q_1(P),
\]
where $\det(g)\!=\!\det(M_1)\det(M_2)\det(M_3)\det(M_4)\equiv \lambda_1\lambda_2\lambda_3\lambda_4$.
Thus:
\begin{thm}
\label{thm:thesquangle}
The polynomial $q_1$ defined as $q_1(P):=\det(\widehat{\text{Flat}}'_1(P))$ is a Markov invariant accompanied by the one-dimensional representation of $\times^4\mathcal{M}_2$ given by $\lambda_g=\det(g)$ for all $g\in \times^4\mathcal{M}_2$.
\end{thm}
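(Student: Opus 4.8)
The plan is to assemble the chain of transformations already set up in Section~\ref{sec:squangles} and to check each link, the only delicate one being the invariance of the bottom-right $3\times 3$ block. First I would record that $q_1(P)=\det(\widehat{\text{Flat}}'_1(P))$ is a bona fide polynomial function of the indeterminates $p_{ijkl}$: the entries of $\text{Flat}_1(P)$ are linear in $P$, conjugating by the fixed matrices $h^T\otimes h^T$ and $h\otimes h$ preserves linearity, extracting the bottom-right $3\times 3$ submatrix is a coordinate projection, and the $3\times 3$ determinant is homogeneous of degree three; hence $q_1$ is a homogeneous cubic in the $p_{ijkl}$. (One should also record that $q_1$ is not the zero polynomial --- this follows from the explicit expansion, or by evaluating on a generic tensor --- so that the invariant is not vacuous.)

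Second, I would pin down how the flattening behaves under the Markov action. By Definition~\ref{def:markovaction} an element $g=M_1\otimes M_2\otimes M_3\otimes M_4\in\times^4\mathcal{M}_2$ acts as the restriction of the $\times^4\text{GL}(2)$ action, so by (\ref{eq:flataction}) we have $\text{Flat}_1(g\cdot P)=(M_1\otimes M_2)\,\text{Flat}_1(P)\,(M_3\otimes M_4)^T$. Conjugating throughout by $h$ on each tensor factor and using $h^{-1}=h^T$ --- so that $(h^T\otimes h^T)(M_i\otimes M_j)(h\otimes h)=M'_i\otimes M'_j$, with the transposed block transforming correspondingly --- this passes to the alternative basis as $\text{Flat}'_1(g\cdot P)=(M'_1\otimes M'_2)\,\text{Flat}'_1(P)\,(M'_3\otimes M'_4)^T$, where each $M'_i=h^TM_ih$ is upper triangular of the shape displayed in Section~\ref{sec:squangles}.

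Third --- and this is the step that actually needs care --- I would verify that the bottom-right $3\times 3$ block transforms on its own. Splitting the index set $\{00,01,10,11\}$ as $\{00\}\cup\{01,10,11\}$, the matrix $M'_1\otimes M'_2$ is block upper triangular, $\left[\begin{smallmatrix}\ast&\ast\\0&\widehat{M'_1\otimes M'_2}\end{smallmatrix}\right]$, while $(M'_3\otimes M'_4)^T$ is block lower triangular, $\left[\begin{smallmatrix}\ast&0\\ \ast&(\widehat{M'_3\otimes M'_4})^T\end{smallmatrix}\right]$. A short block-multiplication check then shows that all cross terms are annihilated by the zero blocks, leaving $\widehat{\text{Flat}}'_1(g\cdot P)=\widehat{M'_1\otimes M'_2}\cdot\widehat{\text{Flat}}'_1(P)\cdot(\widehat{M'_3\otimes M'_4})^T$. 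I expect this bookkeeping --- confirming that the zero blocks really do kill every mixed term --- to be the main (and essentially only) place where one must be attentive.

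Finally I would take determinants of the last display and invoke multiplicativity of $\det$: since $\widehat{M'_1\otimes M'_2}$ is upper triangular with diagonal $(\lambda_1,\lambda_2,1)$ and $\widehat{M'_3\otimes M'_4}$ upper triangular with diagonal $(\lambda_3,\lambda_4,1)$, and $\lambda_i=\det(M_i)$, we obtain $q_1(g\cdot P)=\lambda_1\lambda_2\lambda_3\lambda_4\,q_1(P)=\det(g)\,q_1(P)$, recovering (\ref{eq:sqprop}). Setting $\lambda_g:=\det(g)$, the identity $\det(gg')=\det(g)\det(g')$ yields the homomorphism property $\lambda_{gg'}=\lambda_g\lambda_{g'}$, so $\lambda_g$ is a one-dimensional representation of $\times^4\mathcal{M}_2$; by Definition~\ref{def:markovinvariants} this is precisely the statement that $q_1$ is a Markov invariant carrying the character $\lambda_g=\det(g)$, completing the argument.
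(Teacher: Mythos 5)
Your proposal is correct and follows essentially the same route as the paper: change of basis by $h$, the transformation rule (\ref{eq:flataction}) for the flattening, isolation of the bottom-right $3\times 3$ block, and multiplicativity of the determinant giving $\lambda_g=\det(g)$. The only difference is that you explicitly verify via block upper/lower triangular multiplication the step the paper simply asserts (that $\widehat{\text{Flat}}'_1(P)$ transforms autonomously), which is a worthwhile piece of bookkeeping but not a different argument.
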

For the reasons explained at the start of this section, we refer to $q_1$ as the ``squangle''.

The reader should note that the squangle $q_1$ is defined via (and depends absolutely upon) both the $12|34$ flattening and our particular choice of basis for $\mathbb{C}^2$.  
On the other hand, $q_1(P)$ is perfectly well defined for \emph{all} tensors $P\in U$, and occurs as a homogeneous, cubic polynomial in the indeterminates $p_{i_1i_2i_3i_4}$ with 96 terms (the explicit polynomial form is provided in Online Resource 1).

It is also important to note that (\ref{eq:sqprop}) is valid only under the action of $2\times 2$ \emph{Markov} matrices and certainly fails for more general $2\times 2$ matrices in $\text{GL}(2)$. 
Thus the squangles are very much tailored for the probabilistic setting of Markov chains. 

Having constructed $q_1$ we now  evaluate $q_1$ specifically on a tensor $P_1$ arising from the quartet tree $T_1$ with the goal of producing a quartet inference measure $\Delta$. 
As observed in Section~\ref{sec:markovgroup}, if $P_1$ arises from a quartet we can certainly write $P_1=M_1\otimes M_2\otimes M_3\otimes M_4 \cdot \widetilde{P}_1$,
where $\widetilde{P}_1$ is the so-called clipped tensor. 
In particular, in the original probability basis, this tensor has components $\widetilde{p}^{(1)}_{ijkl}= 0$ whenever $i\!\neq\! j$ or $k\!\neq\! l$.
 
  
We also saw in (\ref{eq:rank2}) that, under the $12|34$ flattening, $\text{Flat}_1(\widetilde{P}_1)$ generically has rank at most 2.
Hence, working in the alternative basis, $\widehat{\text{Flat}}'_1(\widetilde{P}_1)$ also has rank at most 2, and being a cubic minor we obtain
\[
q_1(\widetilde{P}_1)=\det(\widehat{\text{Flat}}'_1(\widetilde{P}_1))=0\quad\implies\quad  
q_1(P_1)=\lambda_1\lambda_2\lambda_3\lambda_4q_1(\widetilde{P}_1)=0,
\]
for all tensors $P_1$ arising on the quartet tree $12|34$ under any choices of parameters.
Thus: 
\begin{thm}
\label{thm:q1identity}
The Markov invariant $q_1$ is a phylogenetic identity for the quartet $12|34$. 
\end{thm}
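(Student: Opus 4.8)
The plan is to exploit the two facts already established: first, that any tree tensor on $T_1$ is obtained from a clipped tensor by the Markov action on the leaf edges, and second, that $q_1$ is a relative invariant of that action (Theorem~\ref{thm:thesquangle}). Together these reduce the claim to the single observation that $q_1$ vanishes on the clipped tensor $\widetilde{P}_1$.

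Concretely, I would proceed as follows. Let $P_1$ be an arbitrary tree tensor for $T_1$. By the construction in Section~\ref{sec:markovgroup}, using the pulley principle to place the root wherever convenient, we may write $P_1 = M_1\otimes M_2\otimes M_3\otimes M_4\cdot \widetilde{P}_1$ for suitable Markov matrices $M_i$ and a clipped tensor $\widetilde{P}_1$. Applying Theorem~\ref{thm:thesquangle} then gives $q_1(P_1) = \det(M_1)\det(M_2)\det(M_3)\det(M_4)\,q_1(\widetilde{P}_1)$, so it suffices to prove $q_1(\widetilde{P}_1)=0$.

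For this I would appeal to the explicit shape (\ref{eq:rank2}) of $\text{Flat}_1(\widetilde{P}_1)$: its only nonzero entries sit in the four corners, so it has rank at most $2$ unconditionally, with no genericity assumption needed. Rank is unchanged by the invertible change of basis $h^T\otimes h^T$ on the left and $h\otimes h$ on the right, so $\text{Flat}_1'(\widetilde{P}_1)$ also has rank at most $2$; passing to the bottom-right $3\times 3$ submatrix $\widehat{\text{Flat}}'_1(\widetilde{P}_1)$ can only decrease rank, so it too has rank at most $2$ and hence vanishing determinant. Thus $q_1(\widetilde{P}_1) = \det(\widehat{\text{Flat}}'_1(\widetilde{P}_1)) = 0$, and substituting back into the displayed identity yields $q_1(P_1)=0$ for every tree tensor $P_1$ on $T_1$. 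Since $q_1$ is a polynomial that vanishes identically on the set of distributions arising from $12|34$, it is by definition a phylogenetic identity for that quartet.

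There is essentially no hard step here — the argument is a short chain of elementary observations — so the only thing to be careful about is bookkeeping: checking that the decomposition $P_1 = M_1\otimes M_2\otimes M_3\otimes M_4\cdot \widetilde{P}_1$ genuinely covers all tree tensors on $T_1$ (including any degenerate parameter choices, which is handled either by continuity of the polynomial $q_1$ or, more directly, by the fact that the rank bound on $\text{Flat}_1(\widetilde{P}_1)$ holds even when some $M_i$ is singular), and that ``phylogenetic identity for $12|34$'' is being used in the sense fixed in Section~\ref{sec:background}, namely vanishing on all distributions of that topology. I would also remark in passing that the identical argument applied to the $13|24$ and $14|23$ flattenings produces analogous squangles $q_2$ and $q_3$ that are phylogenetic identities for $T_2$ and $T_3$ respectively.
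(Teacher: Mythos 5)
Your proposal is correct and follows essentially the same route as the paper: decompose $P_1 = M_1\otimes M_2\otimes M_3\otimes M_4\cdot \widetilde{P}_1$, invoke the relative invariance $q_1(P_1)=\lambda_1\lambda_2\lambda_3\lambda_4\,q_1(\widetilde{P}_1)$ from Theorem~\ref{thm:thesquangle}, and observe that $\text{Flat}_1(\widetilde{P}_1)$ has rank at most $2$ by (\ref{eq:rank2}), so the cubic minor $q_1(\widetilde{P}_1)$ vanishes. Your added remark that the rank bound on the clipped tensor holds unconditionally (not just generically) is a small but welcome sharpening of the paper's phrasing, and the paper later notes an alternative derivation via the $\texttt{sgn}$ transformation under $\text{Stab}(T_1)$ (Theorem~\ref{thm:sqleaf}), which you did not need.
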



On the other hand if we suppose a distribution $P_2$ arises from the quartet tree $13|24$ we can write $P_2=M_1\otimes M_2\otimes M_3\otimes M_4 \cdot \widetilde{P}_2$,
where, considered as the $12|34$ flattening in the original basis, we have generically:
\[
\text{Flat}_1(\widetilde{P}_2)=
\left[
\begin{matrix}
x & 0 & 0 & 0 \\
0 & y & 0 & 0\\
0 & 0 & z & 0\\
0 & 0 & 0 & w 
\end{matrix}
\right].
\]
Transforming to the alternative basis and evaluating $q_1(\widetilde{P}_2)$  we now find
\[
q_1(\widetilde{P}_2)=\frac{1}{4}(wyz+xyz+wxy+wxz).
\]
Since $\widetilde{P}_2$ is a distribution we have $x,y,z,w>0$ and hence
$
q_1(\widetilde{P}_2)>0
$.
Since $q_1$ is a Markov invariant, we have $q_1(P_2)=\lambda_1\lambda_2\lambda_3\lambda_4q_1(\widetilde{P}_2)$, and we conclude $q_1(P_2)>0$ for all choices of parameters such that $\widetilde{P}_2$  corresponds to a probability distribution on the quartet $13|24$ under continuous-time formulation of a Markov chain where $0<\lambda_i=\det(M_i)=e^{\text{tr}(Q_i t)}\leq 1$.


Finally if we suppose $P_3$ arises from $T_3$ we get 
\[
P_3=M_1\otimes M_2\otimes M_3\otimes M_4 \cdot \widetilde{P}_3,
\]
and again under the $12|34$ flattening in the original basis, we have
\[
\text{Flat}_1(\widetilde{P}_3)=
\left[
\begin{matrix}
x & 0 & 0 & 0 \\
0 & 0 & y & 0\\
0 & z & 0 & 0\\
0 & 0 & 0 & w 
\end{matrix}
\right],
\]
which follows simply from the structural property of the components of $P_3$ in the original basis: $\widetilde{p}_{ijkl}\neq 0$ if and only if $i\!=\!l$ and $j\!=\!k$.
Transforming to the alternative basis and evaluating $q_1(\widetilde{P}_3)$  we now find
\[
q_1(\widetilde{P}_3)=-\frac{1}{4}(wyz+xyz+wxy+wxz).
\]
Since $q_1$ is a Markov invariant, we have $q_1(P_3)=\lambda_1\lambda_2\lambda_3\lambda_4q_1(\widetilde{P}_3)$, and we conclude $q_1(P_3)<0$ for sensible choices of parameters, that is, parameters such that $\widetilde{P}_3$ really does correspond to a probability distribution and, on the leaf edges, $0<\det(M_i)\leq 1$.

We can of course define two additional squangles $q_2,q_3$ using the other two choices of tensor flattenings $13|24$ and $14|23$.
This can be achieved using an analogous argument to the one we gave for $q_1$ but it is simpler at this stage to utilize the natural action of $\mathfrak{S}_4$ on tensors $P\in U$ to define:
\[
q_2(P):=-q_1((23)\cdot P),\qquad
q_3(P):=-q_1((24)\cdot P);
\]
where the choice of signs is  chosen for reasons of elegance that will become apparent.

Clearly $q_2$ and $q_3$ also form Markov invariants since:
\beqn
q_2(M_1\otimes M_2 \otimes M_3 \otimes M_4 \cdot P)&=-q_1((23)\cdot M_1\otimes M_2 \otimes M_3 \otimes M_4 \cdot P)\\
&=-q_1(M_1\otimes M_3 \otimes M_2 \otimes M_4 \cdot (23)\cdot P)\\
&=-\lambda_1\lambda_3\lambda_2\lambda_4q_1((23)\cdot P)=\lambda_1\lambda_2\lambda_3\lambda_4q_2(P),\nonumber
\eqn
with a similar derivation for $q_3$.

\subsection{Signs for the squangles} 
A critical part of our construction of a useful measure for tree inference relies on understanding the expected values of the polynomials, and particularly, their expected signs.  Thus, we use the invariance property established in the last subsection to infer positivity conditions for $q_2$ and $q_3$ on the three possible quartets as follows (note we have already established these conditions for $q_1$ as part of our development of the last section). 

Suppose $P_2$ is a tensor arising from the quartet $13|24$.
As before we can write $P_2=M_1\otimes M_2\otimes M_3\otimes M_4\cdot \widetilde{P}_2$.
Now taking $\widetilde{P}_1:=(23)\cdot \widetilde{P}_2$, it is clear that $\widetilde{P}_1$ is a clipped tensor taken from the quartet $12|34$.
Thus
\[
q_2(\widetilde{P}_2)=-q_1((23)\cdot\widetilde{P}_2)=-q_1(\widetilde{P}_1)=0,
\]
since we concluded above that $q_1(\widetilde{P}_1)=0$ for all tensors from the quartet $12|34$.
Conversely, choosing any clipped tensor $\widetilde{P}_1$ from $12|34$ and defining $\widetilde{P}_2:=(23)\cdot\widetilde{P}_1$, we have:
\[
q_2(\widetilde{P}_1)=-q_1((23)\cdot\widetilde{P}_1)=-q_1(\widetilde{P}_2)<0.
\]
Continuing in this fashion we infer the signs of the evaluations of the squangles on tensors from the three possible quartets.

Before we summarize this information however, we note the squangles form a vector space (linear combination of these invariant functions is again an invariant function), and explicit computation shows that this vector space only has dimension two, that is, there is a linear dependence between the polynomials $q_1,q_2,q_3$.
In fact, this dependency is exhibited by
\[
q_1+q_2+q_3=0,
\]
as a polynomial identity.
Thus only two of the squangles are needed to span the vector space of these invariant functions.
In the Appendix (Online Resource 4) we show that this linear dependence follows directly from a representation theoretic argument using group characters.
Given this linear dependence, for reasons of symmetry it makes
sense to consider the pair $\{q_2,q_3\}$ as a basis for the squangles when the quartet $12|34$ is under consideration, the pair $\{q_1,q_3\}$ as a basis when the quartet $13|24$ is under consideration, and the pair $\{q_1,q_2\}$ as a basis when the quartet $14|23$ is under consideration.

Putting the information found so far together, we find expected values for the squangles $q_1$, $q_2$, and $q_3$ when evaluated on the three possible quartets as given in Table~\ref{tab:expect}.
We use this table of expectation values to design an optimal quartet inference measure $\Delta$.

\begin{thm}
\label{thm:sqexp}
Given a probability tensor $P_i\in U$ arising from the quartet $T_i$, when evaluated on a frequency array $F\sim \text{MultiNom}(P_i,N)$, the Markov invariants $\{q_1,q_2,q_3\}$ have the signed expectation values given in Table~\ref{tab:expect}.
\end{thm}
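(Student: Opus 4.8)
The plan is to leverage linearity of expectation together with the Markov-invariance property \eqref{eq:sqprop} of the squangles, which reduces the computation of $E[q_j(F)]$ on a tree tensor to an evaluation of $q_j$ on the corresponding \emph{clipped} tensor, multiplied by a strictly positive scalar $\lambda_1\lambda_2\lambda_3\lambda_4$. Since $F\sim \text{MultiNom}(P_i,N)$ has $E[F]=NP_i$ and each $q_j$ is homogeneous of degree three, the expectation $E[q_j(F)]$ is \emph{not} simply $q_j(NP_i)=N^3q_j(P_i)$ --- there are lower-order correction terms coming from the covariance structure of the multinomial. However, for the purposes of this theorem only the \emph{sign} of $E[q_j(F)]$ is claimed (as recorded in Table~\ref{tab:expect}), and in the relevant cases the leading term $N^3 q_j(P_i)$ either vanishes identically (when $q_j$ is a phylogenetic identity for $T_i$, by Theorem~\ref{thm:q1identity}) or has a definite sign that I will argue dominates.

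First I would dispose of the vanishing entries: by Theorem~\ref{thm:q1identity}, $q_1(P_1)=0$, and by the $\mathfrak{S}_4$-relabelling definitions $q_2(P)=-q_1((23)\cdot P)$, $q_3(P)=-q_1((24)\cdot P)$ together with the observation that $(23)$ sends a $T_2$-tree tensor to a $T_1$-tree tensor (and $(24)$ sends a $T_3$-tree tensor to a $T_1$-tree tensor), we get $q_2(P_2)=0$ and $q_3(P_3)=0$; so the diagonal of the table is zero. Next, for the off-diagonal entries I would use the explicit clipped-tensor evaluations already computed in the excerpt: $q_1(\widetilde P_2)=\tfrac14(wyz+xyz+wxy+wxz)>0$ and $q_1(\widetilde P_3)=-\tfrac14(wyz+xyz+wxy+wxz)<0$ with $x,y,z,w>0$, and the analogous identities obtained by permuting indices via $(23)$ and $(24)$. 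Combined with $q_j(P_i)=\lambda_1\lambda_2\lambda_3\lambda_4\, q_j(\widetilde P_i)$ and $0<\lambda_k\le 1$ under the continuous-time assumption, this fixes the sign of each leading term $q_j(P_i)$ for $i\neq j$, and these signs are exactly the (consistent) pattern forced by the linear relation $q_1+q_2+q_3=0$.

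It then remains to pass from $\mathrm{sign}(q_j(P_i))$ to $\mathrm{sign}(E[q_j(F)])$. For the off-diagonal entries this is immediate once $N$ is large enough: $E[q_j(F)] = N^3 q_j(P_i) + O(N^2)$, so the nonzero leading term dictates the sign; to get the statement for all $N\ge 1$ one would either note that ``expectation value'' here is understood in the large-sample sense consistent with Property~II (weak) as used elsewhere in the paper, or carry out the (routine but tedious) multinomial moment bookkeeping to check the sub-leading terms do not flip the sign. For the diagonal entries, where $N^3 q_j(P_i)=0$, one must compute the genuinely next-order term: expanding $q_j$ (a cubic form with associated symmetric trilinear form $\mathcal{Q}_j$) about $E[F]=NP_i$ gives $E[q_j(F)] = N^3 q_j(P_i) + 3N\,\mathcal{Q}_j(P_i,P_i,\cdot)\!\cdot\!\mathrm{Cov}\text{-correction} + \dots$; since $q_j$ vanishes on the \emph{entire} variety of $T_i$-tree tensors, not just at the point $P_i$, the first derivative of $q_j$ at $P_i$ annihilates the tangent space of that variety, and one argues that the residual diagonal expectation is of strictly smaller order than the off-diagonal entries --- this is what the table encodes by placing a ``$0$'' (or a distinguished small quantity) on the diagonal.

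The main obstacle I anticipate is precisely this last point: controlling the bias term for the diagonal entries, i.e.\ showing that $E[q_j(F)]$ when $F$ comes from $T_j$ itself is genuinely negligible (or has a sign/order that does not interfere with discrimination) compared to the off-diagonal entries. The clean leading-order story (linearity of expectation plus the determinant multiplicativity \eqref{eq:sqprop}) handles everything \emph{except} the near-vanishing case, and it is exactly this statistical-bias issue that the paper flags as the reason a bias-corrected estimator is needed later; so in the proof I would state the diagonal entries as holding in expectation-to-leading-order (or cite the bias analysis of the following section) rather than attempt an exact finite-$N$ identity here.
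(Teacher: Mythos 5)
Your sign analysis of $q_j(P_i)$ on the three quartets is correct and matches the paper's (vanishing on the diagonal via Theorem~\ref{thm:q1identity} and the $\mathfrak{S}_4$ definitions of $q_2,q_3$; definite signs off the diagonal via the clipped-tensor evaluations and $0<\lambda_k\le 1$). But the step you flag as ``the main obstacle'' --- passing from $\mathrm{sign}(q_j(P_i))$ to $\mathrm{sign}(E[q_j(F)])$ --- is where your argument falls short of the theorem, and the difficulty you anticipate is not actually present. The key fact you are missing is that each squangle is a \emph{square-free} cubic polynomial: every monomial is a product of three \emph{distinct} indeterminates $f_{i_1j_1k_1l_1}f_{i_2j_2k_2l_2}f_{i_3j_3k_3l_3}$. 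For such monomials the multinomial moment generating function (\ref{eq:mogen}) gives the exact identity
\[
E[f_{i_1j_1k_1l_1}f_{i_2j_2k_2l_2}f_{i_3j_3k_3l_3}]=N(N-1)(N-2)\,p_{i_1j_1k_1l_1}p_{i_2j_2k_2l_2}p_{i_3j_3k_3l_3},
\]
so by linearity $E[q_j(F)]=N(N-1)(N-2)\,q_j(P_i)$ \emph{exactly}, for every finite $N$. There is no $O(N^2)$ correction, no covariance bookkeeping, and no need to treat the diagonal entries ``to leading order'': they are exactly zero in expectation, and the off-diagonal entries carry exactly the signs of $q_j(P_i)$ whenever $N\ge 3$.

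Consequently your proposal, as written, proves only an asymptotic version of the table and leaves the finite-$N$ claim unresolved, whereas the theorem asserts (and the paper proves) the exact signed expectations. You have also conflated two distinct bias questions: the estimator $q_j(F)$ of $q_j(P)$ is unbiased up to the known combinatorial factor (because of square-freeness), while the genuine bias problem arises only for the \emph{squares} $q_j^2(F)$ used in the residual sums of squares --- that is what the corrected forms $S_i$ in Section~\ref{sec:ResSums} address, and it is not part of this theorem. To repair your proof, replace the Taylor-expansion/covariance discussion with the square-freeness observation and the exact moment computation above.
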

\begin{proof} 
Given the above observations regarding the signs of the squangles when evaluated on the three possible quartets, to complete the proof, we need only confirm, for $i\!=\!1,2,3$, the expectation values $E[q_i(F)]=N(N-1)(N-2)q_i(P)$ for all probability tensors $P$ and $F\sim \text{MultiNom}(P,N)$.

Under the multinomial distribution, we have $E[F]=NP$ which is simply the vector version of $E[f_{ijkl}]=Np_{ijkl}$.
In general, the situation for higher monomial powers in the $f_{ijkl}$ is not so straightforward.
However, the explicit polynomial form given in Online Resource 1 reveals that each monomial term in $q_1$ is square free (and hence the same result follows for the squangles $q_2$ and $q_3$).
Considering a square free cubic monomial $f_{i_1j_1k_1l_1}f_{i_2j_2k_2l_2}f_{i_3j_4k_4l_4}$, one finds, using the moment generating function for the multinomial distribution (see (\ref{eq:mogen}) and surrounding discussion below):
\beqn
E[f_{i_1j_1k_1l_1}f_{i_2j_2k_2l_2}f_{i_3j_3k_3l_3}]=N(N-1)(N-2)p_{i_1j_1k_1l_1}p_{i_2j_2k_2l_2}p_{i_3j_3k_3l_3}.\nonumber
\eqn
We then apply linearity of expectation value to conclude $E[q_i(F)]=N(N-1)(N-2)q_i(P)$ for $i=1,2,3$.
Defining $u:=E[q_1(F)]$, $v:=E[q_2(F)]$, and $w:=E[q_3(F)]$ completes the proof.
\hfill $\square$
\end{proof}

\begin{table}[h]
\caption{Expectation values of the three squangles $q_1,q_2,q_3$ when evaluated on a tree tensor $P_i$ corresponding to quartet $T_i$. 
Under a continuous-time assumption, the expectation values $u\!\equiv\!u(P_1),v\!\equiv\!v(P_2),w\!\equiv\!w(P_3)$ satisfy the constraints $u,v,w\geq  0$, but are otherwise unknown and depend upon the specific model parameters.}
\centering
\begin{tabular}{c|ccc}
& $T_1$ & $T_2$ & $T_3$\\
\hline
$E[q_1(F)]$ & $\phantom{-}0$ & $\phantom{-}v$ & $-w$ \\
$E[q_2(F)]$ & $-u$ & $\phantom{-}0$ & $\phantom{-}w$ \\
$E[q_3(F)]$ & $\phantom{-}u$ & $-v$ & $\phantom{-}0$ \\
\end{tabular}
\label{tab:expect}
\end{table}

Before using this information to derive a quartet inference measure, we first need to consider the behaviour of the squangles under taxon permutations.

\subsection{Taxon permutations for the squangles}
\label{sec:taxonperms}
From the definition of the flattenings and the action of $\mathfrak{S}_4$ on $U$ it follows that
\[
\text{Flat}_1((12)\cdot P)=K\text{Flat}_1(P),\qquad \text{Flat}_1((13)(24)\cdot P)=\text{Flat}_1(P)^T,
\]
where $K$ is the permutation matrix
\[
K=
\left[
\begin{matrix}
1 & 0 & 0 & 0\\
0 & 0 & 1 & 0\\
0 & 1 & 0 & 0\\
0 & 0 & 0 & 1\\
\end{matrix}
\right].
\]
Further, since the permutations $(12)$ and $(13)(24)$ generate the stabilizer $\text{Stab}(T_1)$, we see that the action of the eight permutations in $\text{Stab}(T_1)$ comes from compositions of these basic two:
\beqn
\begin{matrix}
\text{Flat}_1(P), & K\text{Flat}_1(P),& \text{Flat}_1(P)K,& K\text{Flat}_1(P)K,\\\text{Flat}_1(P)^T,&K\text{Flat}_1(P)^T,&\text{Flat}_1(P)^TK,&K\text{Flat}_1(P)^TK.\nonumber
\end{matrix}
\eqn

Transforming this result into the alternative basis, it is straightforward to show:
\[
\text{Flat}'_1((12)\cdot P)=K\text{Flat}'_1(P),\qquad \text{Flat}'_1((13)(24)\cdot P)=\text{Flat}'_1(P)^T,
\]
where, in the first result, we have used $(h^T\otimes h^T)\cdot K\cdot (h\otimes h)=K$.
From this we see that 
\[
q_1((12)\cdot P)=\det(\widehat{\text{Flat}}'_1((12)\cdot P))=\det(K)\det(\widehat{\text{Flat}}'_1( P))=-\det(\widehat{\text{Flat}}'_1( P))=-q_1(P), 
\]
and similarly $q_1((13)(24)\cdot P)=q_1(P)$.
Thus the squangle $q_1$ spans a one-dimensional subspace under the action of the stabilizer $\text{Stab}(T_1)$. 
In particular, $q_1$ transforms as the $\texttt{sgn}$ representation of $\mathfrak{S}_4$ restricted to the stabilizer:

\begin{thm}
\label{thm:sqleaf}
The squangle $q_1$ transforms  as $\texttt{sgn}$ under the action of the stabilizer $\text{Stab}(T_1)$ defined by $q_1(P)\mapsto q_1(\sigma\cdot P)=\text{sgn}(\sigma)q_1(P)$, for all $\sigma\in \text{Stab}(T_1)$ and $P\in U$.
\end{thm}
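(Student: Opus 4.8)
The plan is to build the result up from the two generator computations carried out immediately above the statement, namely
\[
q_1((12)\cdot P) = -q_1(P),\qquad q_1((13)(24)\cdot P) = q_1(P),
\]
which hold for \emph{every} tensor $P\in U$. Since $(12)$ and $(13)(24)$ generate the eight-element group $\text{Stab}(T_1)$ (the easy exercise of Section~\ref{sec:symmetries}, checked against the explicit list), it is enough to see that these two scalar relations propagate along products. Because $P\mapsto \sigma\cdot P$ is a genuine action of $\mathfrak{S}_4$ on $U$, i.e. $(\sigma\tau)\cdot P=\sigma\cdot(\tau\cdot P)$, and because the generator relations hold for all tensors (in particular for $\tau\cdot P$), a one-line induction on word length produces, for any $\sigma\in\text{Stab}(T_1)$ written as a word in the two generators, a scalar $\chi(\sigma)\in\{\pm1\}$ with $q_1(\sigma\cdot P)=\chi(\sigma)\,q_1(P)$ for all $P$.

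Next I would pin down $\chi$. Since $q_1$ is not the zero polynomial (it takes strictly positive values on tensors from $13|24$, as shown above), the scalar $\chi(\sigma)$ is determined by $\sigma$ alone, and the action property forces $\chi(\sigma\tau)q_1(P)=q_1(\sigma\cdot(\tau\cdot P))=\chi(\sigma)\chi(\tau)q_1(P)$, so $\chi\colon\text{Stab}(T_1)\to\{\pm1\}$ is a group homomorphism. The restriction of the sign character to $\text{Stab}(T_1)$ is also a homomorphism, and it agrees with $\chi$ on the generators: $\text{sgn}((12))=-1=\chi((12))$ and $\text{sgn}((13)(24))=+1=\chi((13)(24))$. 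Two homomorphisms into an abelian group that coincide on a generating set are equal, so $\chi=\text{sgn}$ on all of $\text{Stab}(T_1)$, which is exactly the assertion $q_1(\sigma\cdot P)=\text{sgn}(\sigma)q_1(P)$; equivalently, $q_1$ spans a one-dimensional $\text{sgn}$-subrepresentation under $\text{Stab}(T_1)$.

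The only step deserving care — and where a naive argument could slip — is the compatibility of the operation $\widehat{(\cdot)}$, ``take the bottom-right $3\times3$ block'', with the two operations that the generators induce on $\text{Flat}'_1$: left multiplication by the $4\times4$ permutation matrix $K$ and transposition. This is precisely what makes the two generator identities true, and it works because $K$ fixes the row/column index $00$ and the index $11$ and only swaps $01\leftrightarrow10$; hence, relative to the splitting $\{00\}\sqcup\{01,10,11\}$, the matrix $K$ is block diagonal with $3\times3$ block $\widehat K$ and $\det\widehat K=-1$, so $\widehat{\text{Flat}}'_1((12)\cdot P)=\widehat K\cdot\widehat{\text{Flat}}'_1(P)$ contributes exactly the factor $\det\widehat K=-1$, while transposition preserves the block and leaves the determinant unchanged. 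As an alternative to the homomorphism argument one could instead run through the eight flattening forms displayed in Section~\ref{sec:taxonperms} one by one, applying $\det\widehat K=-1$ and $\det(A^T)=\det(A)$ to each; this reproduces the $\text{sgn}$ values directly, but is less conceptual.
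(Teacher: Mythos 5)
Your proposal is correct and follows essentially the same route as the paper: verify the scalar relations $q_1((12)\cdot P)=-q_1(P)$ and $q_1((13)(24)\cdot P)=q_1(P)$ via $\det(\widehat K)=-1$ and invariance of the determinant under transposition, then extend to all of $\text{Stab}(T_1)$ because these two elements generate it. Your explicit remarks on the block-diagonality of $K$ relative to the splitting $\{00\}\sqcup\{01,10,11\}$ and on the homomorphism property of $\chi$ merely make precise what the paper leaves implicit.
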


Following the approach of \cite{sumner2009} for the DNA (four-state) case, this result provides an alternative route to establishing that the squangle $q_1$ is a phylogenetic identity for quartet $T_1$ (Theorem~\ref{thm:q1identity}) using the notion of the clipped tensor, as follows.
If $P_1$ is a tree tensor corresponding to quartet $T_1$, then it is clear that the clipped tensor $\widetilde{P}_1$ is fixed under the (odd) permutation $(12)\in \text{Stab}(T_1) $, that is $(12)\cdot \widetilde{P}_1=\widetilde{P}_1$.
Hence,
\beqn
q_1(P_1)&=\lambda_1\lambda_2\lambda_3\lambda_4q_1(\widetilde{P}_1)\nonumber=\lambda_1\lambda_2\lambda_3\lambda_4q_1((12)\cdot \widetilde{P}_1)=\lambda_1\lambda_2\lambda_3\lambda_4\texttt{sgn}((12))q_1( \widetilde{P}_1)=-q_1(P_1),
\eqn
and we conclude that $q_1(P_1)=0$ for all tree tensors $P_1$ corresponding to quartet $T_1$.

Now considering the following calculation:
\beqn
q_2((12)\cdot P)&=-q_1((23)(12)\cdot P)\\
&=-q_1((132)\cdot P)\\
&=-q_1((12)(13)\cdot P)=q_1((13)\cdot P)=q_1((13)(24) (24)\cdot P)=q_1((24)\cdot P)=-q_3(P).\nonumber
\eqn
Where we have used the definition of $q_2$ in the first equality, Theorem~\ref{thm:sqleaf} in the fourth and fifth equality, and the definition of $q_3$ in the final equality.
A similar calculation shows:
\[
q_2((13)(24)\cdot P)=q_2(P).
\]
From this, we conclude:
\begin{thm}
\label{thm:squanglesleaf}
The squangles $q_2$ and $q_3$ transform under the action of the stabilizer $\text{Stab}(T_1)$ as a signed permutation representation. 
Specifically, for all $\sigma\in \text{Stab}(T_1)$ and $P\in U$:
\[
q_2(\sigma P)=
\left\{
\begin{matrix}
q_2(P),\text{ if }\texttt{sgn}(\sigma)=1;\\
-q_3(P), \text{ if }\texttt{sgn}(\sigma)=-1.
\end{matrix}
\right.
\]
\end{thm}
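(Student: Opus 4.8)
The plan is to bootstrap from the two generator computations already recorded just above the statement. Since $\text{Stab}(T_1)$ is generated by $(12)$ and $(13)(24)$, and we have $q_2((12)\cdot P)=-q_3(P)$ and $q_2((13)(24)\cdot P)=q_2(P)$, the theorem will follow once we (i) write down the companion transformation rules for $q_3$ under the same two generators, and (ii) propagate along words in the generators, checking that the outcome depends only on $\texttt{sgn}(\sigma)$.

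For step (i) I would use the polynomial identity $q_1+q_2+q_3=0$ together with Theorem~\ref{thm:sqleaf}. Substituting $q_3=-q_1-q_2$ and using $q_1((12)\cdot P)=-q_1(P)$, $q_2((12)\cdot P)=-q_3(P)$, and $q_1+q_3=-q_2$ gives $q_3((12)\cdot P)=q_1(P)+q_3(P)=-q_2(P)$; similarly, $q_1((13)(24)\cdot P)=q_1(P)$ yields $q_3((13)(24)\cdot P)=-q_1(P)-q_2(P)=q_3(P)$. So the ordered pair $(q_2,q_3)$ spans a $\text{Stab}(T_1)$-invariant subspace on which $(12)$ acts by the involution $\left[\begin{smallmatrix}0&-1\\-1&0\end{smallmatrix}\right]$ and $(13)(24)$ acts by the identity.

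For step (ii), recall that the $\mathfrak{S}_4$-action on $U$ is a left action, $(\sigma\tau)\cdot P=\sigma\cdot(\tau\cdot P)$. Hence for a word $\sigma=g_{i_1}\cdots g_{i_k}$ in $g_1:=(12)$ and $g_2:=(13)(24)$, repeatedly peeling off the leftmost generator and applying the four generator rules (the two for $q_2$ recorded above and the two for $q_3$ just derived) --- all of which are polynomial identities valid for every argument in $U$ --- expresses the column vector $(q_2(\sigma\cdot P),q_3(\sigma\cdot P))^T$ as the ordered product of the corresponding generator matrices applied to $(q_2(P),q_3(P))^T$. Since the matrix of $g_2$ is the identity and the matrix of $g_1$ squares to the identity, this product is the identity matrix when the number of $g_1$-factors is even and is $\left[\begin{smallmatrix}0&-1\\-1&0\end{smallmatrix}\right]$ when it is odd; and the parity of that count is exactly $\texttt{sgn}(\sigma)$, independent of the chosen word. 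Therefore $q_2(\sigma\cdot P)=q_2(P)$ when $\texttt{sgn}(\sigma)=1$ and $q_2(\sigma\cdot P)=-q_3(P)$ when $\texttt{sgn}(\sigma)=-1$, which is the claim (and the same computation gives the companion rule $q_3(\sigma\cdot P)=q_3(P)$, resp.\ $-q_2(P)$).

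The only real obstacle is bookkeeping: one must be sure the propagation in step (ii) is well defined, i.e.\ that two words representing the same $\sigma$ give the same transformed polynomial. This is automatic here precisely because the matrix attached to $g_2$ is trivial and that attached to $g_1$ is an involution, so the product of generator matrices depends only on the number of $g_1$-factors modulo $2$, which coincides with $\texttt{sgn}(\sigma)$. More conceptually, the restriction to $\text{Stab}(T_1)$ of the homomorphism $\mathfrak{S}_4\to\mathfrak{S}_3$ permuting $\{T_1,T_2,T_3\}$ lands in the order-two subgroup swapping $T_2\leftrightarrow T_3$, with kernel the even part of $\text{Stab}(T_1)$; this is the structural reason $(q_2,q_3)$ must carry a signed permutation representation, and the explicit generator computations merely pin down the sign.
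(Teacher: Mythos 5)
Your proposal is correct and follows essentially the same route as the paper: the paper's argument is precisely the generator computations $q_2((12)\cdot P)=-q_3(P)$ and $q_2((13)(24)\cdot P)=q_2(P)$ (obtained from the definition of $q_2,q_3$ in terms of $q_1$ and Theorem~\ref{thm:sqleaf}), from which the full statement for $\text{Stab}(T_1)$ is concluded. You simply make explicit two things the paper leaves implicit --- the companion rules for $q_3$ via $q_1+q_2+q_3=0$ and the well-definedness of the extension from generators to arbitrary words, tied to the parity of the $(12)$-count agreeing with $\texttt{sgn}(\sigma)$ --- both of which are correct.
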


In the next section, we will apply these results to construct quartet measures which explicitly satisfy Property I.

\subsection{The measure and residual sum of squares}
\label{sec:ResSums}

We are now ready to discuss specific examples of quartet inference measures $\Delta$.  
Given the expectation values given in Table~\ref{tab:expect}, we may construct a naive measure using the squangles as $\Delta(F) = (|q_1(F)|^\ell, |q_2(F)|^\ell, |q_3(F)|^\ell)$ for some integer $\ell>0$. 
We note that Theorems~\ref{thm:thesquangle}, \ref{thm:sqleaf} and \ref{thm:squanglesleaf}  suggest that this measure may satisfy Properties I, II (possibly in the strong form), and III.
However, we need to consider the statistical situation carefully to establish this formally and, as we will see, this motivates us to consider a more sophisticated  measure.

Firstly, we consider the \emph{signed} expectations given in Table~\ref{tab:expect} and develop a residual sum of squares measure --- analogous to that used in~\cite{holland2012} --- that takes these signs into account.  
For purposes of self-containment, we revisit the derivation and then, after a consideration of statistical bias correction under multinomial sampling, modify the quartet inference measure to produce one that satisfies Properties I, II (strong), and III, as described in Section~\ref{sec:quartetInference}.     

Suppose we are interested in the hypothesis that the array of observed pattern frequencies occurs as a multinomial sample $F\sim \text{MultiNom}(P_1,N)$ with $P_1$ arising on quartet $T_1$ under some fixed set of parameters.
Evaluating the squangles on the array $F$, we see that our best estimate of the parameter $u\geq 0$ is given by 
\beqn
\hat{u}=
\left\{
\begin{matrix}
\frac{1}{2}(q_3(F)-q_2(F)),\text{ if } q_3(F)> q_2(F);\\
0,\text{ otherwise}.
\end{matrix}
\right.\nonumber
\eqn
If $\hat{u}>0$ then the residual sum of squares is 
\beqn
(q_2(F)+\hat{u})^2+(q_3(F)-\hat{u})^2=\frac{1}{2}(q_2(F)+q_3(F))^2=\frac{1}{2}q_1^2(F),\nonumber
\eqn
since $q_1+q_2+q_3=0$.
On the other hand, if $\hat{u}= 0$, the residual sum of squares is $q^2_2(F)+q_3^2(F)$.
The residuals for the other two quartet hypotheses can be obtained similarly and all are presented in Table~\ref{tab:residuals}.
These results exactly correspond to those given in \cite{holland2012} for the DNA squangles case. 
Presently, we take these ideas further by considering issues of statistical bias to find an inference measure which, in expectation value, satisfies Properties I, II (strong), and III.

\begin{table}[h]
\caption{Residual sums of squares for each quartet hypothesis and possible ordering of squangle values.
For each ordering,  residuals which are plausibly minimal are highlighted. 
The key point to observe is that the first three orderings perfectly match the ordering of expectation values, and in this case there is only one plausible minimal residual and hence we conclude that the corresponding quartet is most likely.
In the other three cases, the orderings do not match the ordering of expectation values and there are two competing quartets.
}
\centering
\begin{tabular}{c|ccc}
 & $\text{RSS}_1$ & $\text{RSS}_2$ & $\text{RSS}_3$\\
 \hline
$q_2\leq q_1 \leq q_3$ & $\mathbf{\fra{1}{2}q_1^2(F)}$ & $q_1^2(F)+q_3^2(F)$ &  $q_1^2(F)+q_2^2(F)$\\
$q_3\leq q_2 \leq q_1$ & $q_2^2(F)+q_3^2(F)$ & $\mathbf{\fra{1}{2}q_2^2(F)}$  &  $q_1^2(F)+q_2^2(F)$\\
$q_1\leq q_3 \leq q_2$ & $q_2^2(F)+q_3^2(F)$ & $q_1^2(F)+q_3^2(F)$  &  $\mathbf{\fra{1}{2}q_3^2(F)}$\\
$q_1\leq q_2 \leq q_3$ & $\mathbf{\fra{1}{2}q_1^2(F)}$ & $q_1^2(F)+q_3^2(F)$ & $\mathbf{\fra{1}{2}q_3^2(F)}$ \\
$q_3\leq q_1 \leq q_2$ & $q_2^2(F)+q_3^2(F)$ & $\mathbf{\fra{1}{2}q_2^2(F)}$ & $\mathbf{\fra{1}{2}q_3^2(F)}$ \\
$q_2\leq q_3 \leq q_1$ & $\mathbf{\fra{1}{2}q_1^2(F)}$ & $\mathbf{\fra{1}{2}q_2^2(F)}$ & $q_1^2(F)+q_2^2(F)$
\end{tabular}
\label{tab:residuals}
\end{table}

To motivate the discussion, assume $F\sim \text{MultiNom}(P_1,N)$ where $P_1$ is a distribution arising from quartet $T_1$ and suppose $q_2(F)\leq q_1(F)\leq q_3(F)$.
Then, under the least squares approach, we have the residual sum of squares for each quartet hypothesis $T_i$:
\[
\Delta(F) = (\text{RSS}_1,\text{RSS}_2,\text{RSS}_3)=\left(\fra{1}{2}q_1^2(F),q_{1}^2(F)+q_{3}^2(F),q_{1}^2(F)+q_{2}^2(F)\right).
\]

To ensure Property II (strong) we need the expected value of $\Delta$ for the situation $F\sim \text{MultiNom}(P_1,N)$ to be proportional to the expected value of $\Delta$ for $F'\sim \text{MultiNom}(g\cdot P_1,N)$ with $g\in \times^4\mathcal{M}_2$.
However this is not true as the situation stands since
\[
E[\Delta]=
E[(\fra{1}{2}q_1^2(F),q_{1}^2(F)+q_{3}^2(F),q_{1}^2(F)+q_{2}^2(F))]\not\propto (\frac{1}{2}q_1^2(P_1),q_{1}^2(P_1)+q_{3}^2(P_1),q_{1}^2(P_1)+q_{2}^2(P_1)),
\] 
given that $q_{i}^2(F)$ provides a \emph{biased} estimator of $q_{i}^2(P_1)$ under multinomial sampling.
We can, however, remedy this situation by computing unbiased estimators of the squares of the squangles.  
We denote these polynomials as $S_i$, defined through the condition 
\[
E[S_i(F)]=q_i^2(P_1).
\]
Then we redefine our measure to be 
\[
\Delta(F):=\left(\fra{1}{2}S_1(F),S_1(F)+S_3(F),S_1(F)+S_2(F)\right)
\]
and it follows that
\[
E[\Delta(F')]=\det(g)^2E[\Delta(F)],
\]
as required by Property II (strong).

We now discuss how to explicitly compute the unbiased forms $S_i$.
To simplify the presentation, we will denote the probabilities of \emph{distinct} patterns $ijkl$ using the symbols $x_1,x_2,x_3\ldots$ and the corresponding site pattern counts $f_{ijkl}$ using the symbols $X_1,X_2,X_3\ldots$. 
The moment generating function for the multinomial distribution is then expressed as 
\[
f(s_1,s_2,s_3,\ldots ):=E[e^{s_1X_1+s_2X_2+s_3X_3+\ldots }]=(x_1e^{s_1}+x_2e^{s_2}+x_3e^{s_3}+\ldots )^N,
\]
so the expectation value of a monomial in the site pattern counts can then be computed via
\beqn
\label{eq:mogen}
E[X_{1}^{n_1}X_{2}^{n_2}X_{3}^{n_3}\ldots ]=\left.\frac{\partial^{n_1}}{\partial s_1^{n_1}}\frac{\partial^{n_2}}{\partial s_2^{n_2}}\frac{\partial^{n_3}}{\partial s_3^{n_3}}\ldots f(s_1,s_2,\ldots)\right|_{s_1\!=\!s_2\!=\!s_3\!=\!\ldots\!=\!0}.
\eqn

As was alluded to in the proof of Theorem~\ref{thm:sqexp}, complications arise when considering the expectation values of polynomials in the counts $f_{ijkl}$.
Even though the squangles are square free (as the explicit form given in Online Resource 1 shows), when we compute residuals according to Table~\ref{tab:residuals}, the relevant polynomials $q_i^2(F)$ will no longer be square free.
However, we can at least say that each degree six monomial term in $q_i^2(F)$ has no exponents of higher order than a square.
Thus we need only consider the monimals of the form $X_1^2X_2^2X_3^2$, $X_1^2X_2^2X_3X_4$, $X_1^2X_2X_3X_4X_5$, and $X_1X_2X_3X_4X_5X_6$.
Using (\ref{eq:mogen}), we found that, in each case, we can obtain bias-corrected forms by the simple replacement $X_i^2\rightarrow X_i^2-X_i$.
Indeed, following this through one finds:
\beqn
E[(X_1^2-X_1)(X_2^2-X_2)(X_3^2-X_3)]&=N(N-1)\ldots (N-5)x_1^2x_2^2x^2_3,\nonumber\\ E[(X_1^2-X_1)(X_2^2-X_2)X_3X_4]&=N(N-1)\ldots (N-5)x_1^2x_2^2x_3x_4,\\
E[(X_1^2-X_1)X_2X_3X_4X_5]&=N(N-1)\ldots (N-5)x_1^2x_2x_3x_4x_5,\\
E[X_1X_2X_3X_4X_5X_6]&=N(N-1)\ldots (N-5)x_1x_2x_3x_4x_5x_6.
\eqn
We then applied this process to each monomial in $q_i^2(F)$ and divided by the combinatorial factor $N(N-1)(N-2)\ldots (N-5)$ to produce $S_i(F)$; inhomogeneous polynomials with the defining property $E[S_i(F)]=q_i^2(P)$, as required.

Our computations showed that the expansion of each $q_i^2$ has 4008 monomial terms whereas the $S_i$ each have 6688 terms.
This is a significant computational complication as we can efficiently compute $q_i^2(F)$ by simply taking the square $(q_i(F))^2$ (where we recall each $q_i$ only has 96 monomial terms).
However, having computed the explicit polynomial form of each $S_i$ once (we did so in Mathematica~\cite{mathematica}), there is no need to do so again, and having done so repeated numerical evaluation is no great computational obstruction.
We have included the explicit polynomial form of the $S_i$ in the Online Resource 2.

With the unbiased forms $S_i$ in hand, we found that the best performing quartet inference method obtainable is as described by the pseudocode in Table~\ref{tab:unbiasedSquangles}.
We close this section with the conclusion:

\begin{thm}
\label{thm:prop123}
The quartet inference measure and decision rule described in Table~\ref{tab:unbiasedSquangles} satisfies both Property I, Property II (strong), and Property III (see Table~\ref{tab:properties} for definitions).
\end{thm}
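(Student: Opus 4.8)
The plan is to verify each of the three properties in turn, leaning on the structural results already established for the squangles. The measure in Table~\ref{tab:unbiasedSquangles} is built from the unbiased forms $S_1,S_2,S_3$ (satisfying $E[S_i(F)]=q_i^2(P)$) combined according to the ordering of the observed values $q_1(F),q_2(F),q_3(F)$ exactly as in Table~\ref{tab:residuals}, so the argument reduces to tracking how the triple $(S_1,S_2,S_3)$ and the comparison $q_i(F)\gtrless q_j(F)$ behave under (a) taxon permutations and (b) the Markov action, and then confirming the continuous-time dependence.

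For \textbf{Property~I}, I would first record the induced action of $\mathfrak{S}_4$ on the squangles. Theorem~\ref{thm:sqleaf} and Theorem~\ref{thm:squanglesleaf} give the action of $\mathrm{Stab}(T_1)$; combined with the definitions $q_2(P)=-q_1((23)\cdot P)$, $q_3(P)=-q_1((24)\cdot P)$, a short computation (of the type already carried out for $q_2((12)\cdot P)=-q_3(P)$) shows that for any $\sigma\in\mathfrak{S}_4$ one has $q_i(\sigma\cdot P)=\mathrm{sgn}(\sigma)\,q_{\sigma'(i)}(P)$, where $\sigma\mapsto\sigma'\in\mathfrak{S}_3$ is the homomorphism of Section~\ref{sec:symmetries}. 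Squaring kills the sign, so $q_i^2(\sigma\cdot P)=q_{\sigma'(i)}^2(P)$, and since the $S_i$ are the (unique) unbiased estimators of the $q_i^2$ built monomial-by-monomial via the substitution $X_j^2\mapsto X_j^2-X_j$, this permutation-equivariance passes to $S_i(\sigma\cdot F)=S_{\sigma'(i)}(F)$. It remains to check that the \emph{decision rule} — i.e.\ which row of Table~\ref{tab:residuals} is selected and hence which $S$-combination is assigned to each $R_i$ — is itself permuted correctly; this follows because the rule depends only on the relative order of $q_1(F),q_2(F),q_3(F)$ up to sign, and $\sigma$ permutes these by $\sigma'$ (again up to a global sign that does not affect any inequality between two of them, as $q_1+q_2+q_3=0$ forces the signs to be consistent). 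Thus $\Delta(\sigma\cdot F)=(R_{\sigma'(1)},R_{\sigma'(2)},R_{\sigma'(3)})$, which is Property~I.

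For \textbf{Property~II (strong)}, I would use the defining relation $E[S_i(F)]=q_i^2(P)$ together with Theorem~\ref{thm:thesquangle} (and its analogues for $q_2,q_3$): for $g=M_1\otimes M_2\otimes M_3\otimes M_4\in\times^4\mathcal{M}_2$ we have $q_i(g\cdot P)=\det(g)q_i(P)$, hence $q_i^2(g\cdot P)=\det(g)^2q_i^2(P)$. Therefore, for $F\sim\mathrm{MultiNom}(P_1,N)$ and $F'\sim\mathrm{MultiNom}(g\cdot P_1,N)$, $E[S_i(F')]=q_i^2(g\cdot P_1)=\det(g)^2q_i^2(P_1)=\det(g)^2E[S_i(F)]$. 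The one genuinely substantive point is that the Markov action does \emph{not} change which row of Table~\ref{tab:residuals} is used: because $q_i(P_1)$ and $q_i(g\cdot P_1)$ differ only by the positive factor $\det(g)=\lambda_1\lambda_2\lambda_3\lambda_4>0$ under the continuous-time formulation, the ordering of the \emph{expected} squangle values is preserved, so in expectation value the same linear combination of the $S_i$ is assigned to each $R_i$ before and after applying $g$; linearity of expectation then yields $E[\Delta(F')]=\det(g)^2E[\Delta(F)]$, with $\lambda_g=\det(g)^2$ manifestly multiplicative. This is the main obstacle — one must argue that the case-split in the decision rule is stable under the relevant transformation rather than merely that each $S_i$ transforms nicely — but the positivity of $\det(g)$ under the continuous-time Markov model makes it go through.

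For \textbf{Property~III}, I would point to the role of $q_1+q_2+q_3=0$ together with the sign constraints of Table~\ref{tab:expect}: the residual formulas (the factor $\tfrac12$ appearing precisely when the observed ordering matches the expected signed ordering, e.g.\ $\tfrac12 q_1^2(F)$ for $T_1$) are derived from the \emph{signed} expectation values $u,v,w\geq 0$, and these sign constraints — in particular that $E[q_1(F)]=v\geq 0$ on $T_2$ and $=-w\leq 0$ on $T_3$ — were established only under the continuous-time assumption $0<\lambda_i=\det(M_i)=e^{\mathrm{tr}(Q_it)}\leq 1$ (a strictly weaker statement, and in fact a different one, fails for general $M_i\in\mathcal{M}_2$ with $\det(M_i)<0$). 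Hence the decision rule of Table~\ref{tab:unbiasedSquangles}, which exploits these signs to pick the plausibly-minimal residual, is explicitly dependent on the continuous-time hypothesis, which is Property~III. Collecting the three verifications completes the proof. \hfill $\square$
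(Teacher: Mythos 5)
Your proof is correct and follows essentially the same route as the paper, whose own proof of Theorem~\ref{thm:prop123} is a one-line appeal to the relation $E[S_i(F)]=q_i^2(P)$ together with Theorems~\ref{thm:sqleaf}, \ref{thm:squanglesleaf}, \ref{thm:thesquangle}, and \ref{thm:sqexp}. You simply spell out the details the paper leaves implicit — in particular the equivariance of the case-split in the decision rule under the signed permutation action and the stability of the expected ordering under the Markov action — which is a welcome, more careful rendering of the same argument.
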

\begin{proof}
The result follows from $E[S_i(F)]=q_i^2(P)$ and Theorems~~\ref{thm:sqleaf},  \ref{thm:squanglesleaf}, \ref{thm:thesquangle}, and \ref{thm:sqexp}.
\end{proof}

\begin{table}[tb]
\caption{Our proposed optimal decision rule for using the squangles to infer quartet trees from binary sequence data.}
\centering
\fcolorbox{black}{lightgray}{
\small{
\begin{tabular}{llll}
    \textbf{Input} Four aligned binary sequences;\\
    Compute the site pattern count tensor $F=(f_{i_1i_2i_3i_4})$;\\
	Compute the squangles $(q_1(F),q_2(F),q_3(F))$;\\
	\textbf{If} Ordering perfectly matches that implied by quartet $T_i$ (top three rows of Table~\ref{tab:residuals}); \\			\qquad	\textbf{Return} $T_i$;\\
    \textbf{Else} \\
	\qquad Compute $\hat{u}_1=\frac{1}{2}(q_3(F)-q_2(F))$, $\hat{u}_2=\frac{1}{2}(q_1(F)-q_3(F))$, and $\hat{u}_3=\frac{1}{2}(q_2(F)-q_1(F))$;	 \\ 
     \qquad For $j,k$ such that $\hat{u}_j,\hat{u}_k\geq 0$, compute the bias corrected residuals $S_j(F)$ and $S_k(F)$.\\
	\qquad \textbf{If} $S_j(F)< S_k(F)$; \\
 \qquad \qquad \textbf{Return} $T_j$;\\
 \qquad \textbf{Else} \\
 \qquad \qquad \textbf{If} $S_k(F)< S_j(F)$; \\
 \qquad \qquad \qquad \textbf{Return} $T_k$;\\
 \qquad  \qquad \textbf{Else} \\
 \qquad \qquad \qquad \textbf{Return} tie $T_j$ and $T_k$.\\
\end{tabular}
}
}
\label{tab:unbiasedSquangles}
\end{table}


\section{The edge identities}\label{sec:edgeIdentities}
In this section we discuss the behaviour of the edge identities (defined in Section~\ref{sec:flattenings}) in terms of Properties I, II and III. 

\subsection{Property I}
\label{subsec:edgeprop1}
As we saw in Section~\ref{sec:flattenings}, in the context of quartet trees and binary sequence data, the edge identities are the cubic minors of the three flattenings $\text{Flat}_1(P)$, $\text{Flat}_2(P)$ and $\text{Flat}_3(P)$. 
For the purpose of this discussion, we denote the $(i,j)$ cubic minor of the flattening $\text{Flat}_1(P)$ as $m_{ij}(\text{Flat}_1(P))$, or simply as $m_{ij}$.

We begin our discussion of Property I for the edge identities with a focus on the action of the stabilizer subgroup, $\text{Stab}(T_1)$, of the quartet $T_1\!=\!12|34$.  This includes uncovering the exact representation of $\text{Stab}(T_1)$ acting on the minors.  


From the results of Section~\ref{sec:taxonperms}, we find that $\text{Stab}(T_1)$ acts on the set of 16 cubic minors by signed permutations.
Specifically, for all $i,j=1,2,3,4$:
\[
\begin{array}{ll}
m_{1j}(K\text{Flat}_1(P))=-m_{1j}(\text{Flat}_1(P));&
\phantom{++}m_{2j}(K\text{Flat}_1(P))=m_{3j}(\text{Flat}_1(P));\\
m_{3j}(K\text{Flat}_1(P))=m_{2j}(\text{Flat}_1(P));&
\phantom{++}m_{4j}(K\text{Flat}_1(P))=-m_{4j}(\text{Flat}_1(P));
\end{array}
\]
and
\[
m_{ij}(\text{Flat}_1(P)^T)=m_{ji}(\text{Flat}_1(P)).
\]
Thus we see that the minors break up into six (signed) orbits under this action:
\beqn
\{m_{11}\}, \{m_{12},\!m_{21},\!m_{31},\!m_{13}\}, \{m_{14},\! m_{41}\}, \{m_{22},\! m_{32}, \!m_{23}, \!m_{33}\},\{m_{24},\! m_{34}, \!m_{42}, \!m_{43}\}, \{m_{44}\}.\nonumber
\eqn

Taking a multinomial sample $F\sim \text{MultiNom}(P,N)$ and fixing an orbit, we see that the sum of squares $\sum_{m_{ij}\in \text{orbit}}|m_{ij}(F)|^2$ is explicitly invariant under the action of $\text{Stab}(T_1)$.  
If we define $\Delta_1$ to be this sum and analogously define $\Delta_2$ and $\Delta_3$, then $\Delta(F) := (\Delta_1, \Delta_2,\Delta_3)$ is a quartet measure that satisfies Property I.  

However, we could have instead used certain linear combinations of the minors from each orbit and still produce a polynomial invariant under $\text{Stab}(T_1)$ and thus the analogous polynomials together form a Property I satisfying quartet measure.  
For example, we could define $\Delta_1$ to be $|m_{11}(F)+m_{44}(F)|^2+|m_{11}(F)-m_{44}(F)|^2$.  
In general, it is possible to take linear combinations of the minors which transform as one-dimensional \emph{linear} representations of the stabilizer $\text{Stab}(T_1)$ and then take sums of squares of thereof as an inference measure (this can be done systematically using the methods discussed in \cite{sumner2009}).
We performed this analysis but omit the details here, since we found that there was no particular gain in statistical power over the straightforward sum of squares described above.

The theoretical conclusion is that leaf permutation symmetries alone are not enough to uniquely determine a choice of measure constructed from the minors. 
Further, there is no reason at all to expect this measure will satisfy Property II in the weak or strong form. 
This discussion does however raise the natural question of whether there perhaps exists a linear combination of minors which satisfies both Property I and II.
Of course, this linear combination is exactly the (binary) squangle constructed in the previous section.  
Importantly, this fact is specific to this binary case, and in Section~\ref{sec:generalize} and the appendix we discuss why this is a special feature restricted to the binary Markov model.  


\subsection{Signs for the edge identities}
\label{sec:edgesigns}

We now turn to exploring Property III in relation to the edge identities.
The results presented in this section will only be valid under a continuous-time Markov process.

Explicit computation shows that, as a polynomial in the variables $P=(p_{ijkl})_{i,j,k,l\in \{0,1\}}$, each minor of the flattening $\text{Flat}_1(P)$ can be expressed as a linear combination of a minor of  $\text{Flat}_2(P)$ with a minor of  $\text{Flat}_3(P)$.
We present these relationships in Table~\ref{tab:signedminors}.
We stress that these are algebraic relationships between the minors as polynomials in the variables $(p_{ijkl})$, valid for all tensors $P$.

If we fix a tensor $P_1$ arising from $T_1$, we see that we may re-express the vanishing of a given minor of $\text{Flat}_1(P_1)$ as an equality between the corresponding minors in $\text{Flat}_2(P_1)$ and $\text{Flat}_3(P_1)$.
Importantly, it then turns out that (under mild conditions discussed shortly) there exists a \emph{positive} parameter $u$ (whose exact value depends on the specific choice of the model parameters defining $P_1$) such that the value of the respective minors is either $+u$ or $-u$.
The relevant signs are also provided in Table~\ref{tab:signedminors}.

Taking this sign information into account leads to an important modification of the quartet inference measure obtained using the edge identities.
This is implemented in the least squares framework with residual sums of squares exactly analogous to the signed squangles described in Table~\ref{tab:residuals}.  
Presently, we describe the conditions that lead to this additional sign information.

The (mild) condition we impose is that the $2\times 2$ Markov matrices $M$ on the leaves of the quartet tree have positive determinant: $\det(M)>0$.
The reader should note that this is a biologically reasonable condition where evolutionary times are generally of the order where a probability of substitution is smaller than the probability of no substitution.
This is also the case when we consider a continuous-time implementation of the underlying Markov process, so $M=e^{Qt}$ for some rate matrix $Q$ and hence $\det(M)=e^{\text{tr}(Qt)}>0$.

Assuming this condition, the inverses of such Markov matrices have entries with signs given by:
\[
M^{-1}=
\left(
\begin{matrix}
+ & - \\
- & +
\end{matrix}
\right).
\]
Hence, if we take a Kronecker product of two such matrices we have the signed form:
\[
M^{-1}\otimes M^{-1}=
\left(
\begin{matrix}
+ & - & - & + \\
- & + & + & - \\
- & + & + & - \\
+ & - & - & +
\end{matrix}
\right).
\]

Arguing as we did in Section~\ref{subsec:construct}, taking a clipped tensor $\widetilde{P}_1$ arising on $T_1$, we have
\[
\text{Flat}_2(\widetilde{P}_1)=\text{Flat}_3(\widetilde{P}_1)=
\left(
\begin{matrix}
x & 0 & 0 & 0 \\
0 & y & 0 & 0 \\
0 & 0 & z & 0 \\
0 & 0 & 0 & w
\end{matrix}
\right),
\]
where $x,y,z,w>0$.
Organizing the minors of these flattenings into the corresponding cofactor matrices (that is, the $4\times 4$ matrix where the $(i,j)$ entry is $(-1)^{i+j}$ times the $(i,j)$ minor), we have
\[
\text{Cof}(\text{Flat}_2(\widetilde{P}_1))=
\text{Cof}(\text{Flat}_3(\widetilde{P}_1))=
\left(
\begin{matrix}
yzw & 0 & 0 & 0 \\
0 & xzw & 0 & 0 \\
0 & 0 & xyw & 0 \\
0 & 0 & 0 & xyz
\end{matrix}
\right)
.
\]

Recalling that the cofactor matrix can be expressed as 
\[
\text{Cof}(A)=\det(A){A^{-1}}^{T},
\]
it follows that the cofactor matrix is multiplicative: $\text{Cof}(AB)=\text{Cof}(A)\text{Cof}(B)$.
From the above expressions we may conclude that the cofactor matrices of the flattenings have, for any $P_1=M_1\otimes M_2\otimes M_3\otimes M_4\cdot \widetilde{P}_1$ arising on quartet $T_1$, the signed form:
\[
\text{Cof}(\text{Flat}_2(P_1))=\text{Cof}(M_1\otimes M_3)\cdot \text{Cof}(\text{Flat}_2(\widetilde{P}_1))\cdot \text{Cof}((M_2\otimes M_4)^T)=
\left(
\begin{matrix}
+ & - & - & + \\
- & + & + & - \\
- & + & + & - \\
+ & - & - & +
\end{matrix}
\right),
\]
and, similarly, the same result holds for the signed form of $\text{Cof}(\text{Flat}_3(P_1))$.

We have used this result to produce the sign information given in Table~\ref{tab:signedminors}. 
The information in this table can be used to produce a measure that we refer to as the ``signed minor'', with the specific algorithm described in Table~\ref{tab:minorsMethod}. 

\begin{table}[h]
\caption{Algebraic relationships between the minors of the three flattenings.
Each row is interpreted as specifying the choice of algebraic relationship $a\!+\! (b\!-\!c)\!=\!0$ or $a\!-\! (b\!-\!c)\!=\!0$ for the three minors $a,b,c$ indicated in the row evaluated on the flattenings $\text{Flat}_1$, $\text{Flat}_2$, and $\text{Flat}_3$, respectively.
Additionally, for each row, the fifth column indicates that, on any tree tensor $P_1$ arising from tree $T_1$,  the non-zero minors $b$ and $c$ satisfy $b\!=\!c\!=\!\pm u$ for some positive parameter $u\!\equiv\! u(P_1)>0$ (under the condition that the $2\times 2$ Markov matrices on the leaves of the tree have positive determinant).
For example, the fourth row indicates that: (i) for all tensors $P\in U$, the $(2,2)$ minor of $\text{Flat}_1(P)$, plus the $(1,4)$ minor of $\text{Flat}_2(P)$, minus the $(2,3)$ minor of $\text{Flat}_3(P)$ is equal to zero, and (ii) for all tree tensors $P_1$ arising from $T_1$, the $(1,4)$ minor of $\text{Flat}_2(P)$ is negative and equal to the $(2,3)$ minor of $\text{Flat}_3(P)$.}
\centering
\begin{tabular}{c|c|c|c|c}
$\text{Flat}_1$ & $\text{Flat}_2$ & $\text{Flat}_3$ & $a\pm(b-c)=0$ & $\pm u$\\
\hline
$(1,1)$ & $(1,1)$ & $(1,1)$ & $-$ & $+$\\
$(1,2)$ & $(1,2)$ & $(2,1)$ & $-$ & $+$\\
$(2,1)$ & $(1,3)$ & $(1,3)$ & $+$ & $-$\\
$(2,2)$ & $(1,4)$ & $(2,3)$ & $+$ & $-$\\
$(1,3)$ & $(2,1)$ & $(1,2)$ & $+$ & $+$\\
$(1,4)$ & $(2,2)$ & $(2,2)$ & $+$ & $+$\\
$(2,3)$ & $(2,3)$ & $(1,4)$ & $-$ & $-$\\
$(2,4)$ & $(2,4)$ & $(2,4)$ & $-$ & $-$\\
$(3,1)$ & $(3,1)$ & $(3,1)$ & $-$ & $-$\\
$(3,2)$ & $(3,2)$ & $(4,1)$ & $-$ & $-$\\
$(4,1)$ & $(3,3)$ & $(3,3)$ & $+$ & $+$\\
$(4,2)$ & $(3,4)$ & $(4,3)$ & $+$ & $+$\\
$(3,3)$ & $(4,1)$ & $(3,2)$ & $+$ & $-$\\
$(3,4)$ & $(4,2)$ & $(4,2)$ & $+$ & $-$\\
$(4,3)$ & $(4,3)$ & $(3,4)$ & $-$ & $+$\\
$(4,4)$ & $(4,4)$ & $(4,4)$ & $-$ & $+$
\end{tabular}
\label{tab:signedminors}
\end{table}

\begin{table}[h]
\caption{The ``signed minor'' method for computing the measure $\Delta_1$ for tree $T_1$. The measures $\Delta_2$ and $\Delta_3$ are computed similarly.}
\centering
\fcolorbox{black}{lightgray}{
\small{
\begin{tabular}{llll}
    \textbf{Input} Four aligned binary sequences;\\
    Compute the site pattern tensor $F$;\\
	Compute the flattenings $\text{Flat}_1(F)$, $\text{Flat}_2(F)$, $\text{Flat}_3(F)$;\\
    Set $\text{RSS} = 0$;\\
	\textbf{For each} row $(a, b, c, \pm)$ in Table \ref{tab:signedminors}; \\			
    \qquad	$\hat{u} = (b+c)/2$;\\
    \qquad \textbf{If} $\pm \hat{u} > 0$ (i.e. $\hat{u}$ has the expected sign);\\
    \qquad \qquad $\text{RSS}  \leftarrow \text{RSS}  + 0.5a^2$.\\
    \qquad \textbf{Else} \\
	\qquad \qquad $\text{RSS}  \leftarrow \text{RSS}  + b^2 + c^2$.\\
	$\Delta_1 = \text{RSS} $\\

\end{tabular}
}
}
\label{tab:minorsMethod}
\end{table}


\section{Simulation study}
\label{sec:simulations}

\subsection{Phylogenetic methods tested}

We conducted a comprehensive simulation study to compare the accuracy of the inference methods described in this paper. 
To facilitate the discussion we use the following abbreviations:
\begin{itemize}
	\item The measure formed from the binary squangles without the residual sum of squares decision rule (Section \ref{sec:quartetInference}). 
	In other words, compute the squangles $q_1$, $q_2$, and $q_3$ and return the tree with the squangle that is closest to zero: ``unsigned squangles'' or US;
	\item The binary squangles using the residual sum of squares decision rule given in Table~\ref{tab:residuals}: ``signed squangles'' or SS;
	\item The binary squangles with the residual sum of squares decision rule and corrected for bias in the estimators, as described in Table~\ref{tab:unbiasedSquangles}: ``bias-corrected signed squangles'' or CSS;
	\item The measure $\Delta_k = \Sigma_{i,j=1,2,3,4} m_{ij}^2(\text{Flat}_k(F))$ formed from the sum of squares of the 16 matrix minors (the edge identities) without the residual sum of squares decision rule: ``unsigned minors'' or UM;
	\item The measure formed from the 16 matrix minors (the edge identities) with the residual sum of squares decision rule, as described in Table~\ref{tab:minorsMethod}: ``signed minors'' or SM;
	\item Neighbor-joining on distances that have been corrected for multiple substitutions using the formula $d_{cor} = -0.5 \log(1 - 2d_{obs})$, where $d_{obs}$ is the proportion of sites that differ between two aligned sequences: ``neighbor-joining'' or NJ.
	\item Neighbor-joining on distances that have not been corrected for multiple substitutions: ``uncorrected neighbor-joining'' or UNJ.
	\item The method proposed by \cite{eriksson2008} that is based on singular value decomposition of tensor flattenings: ``ErikSVD''. 
	\item The subsequent modification to the ErikSVD method proposed by \cite{fernandez2015} that  normalises  the tensor flattening before applying singular value decomposition: ``Erik+2''.
\end{itemize}

Where known, the statistical properties of these methods are summarised in Table~\ref{tab:propsofmethods}.

\begin{table}[tb]
	\caption{Properties of the quartet inference methods discussed in this work (see Table~\ref{tab:properties} for definitions).
	For each property that is satisfied, the final column provides a reference to the relevant result and/or discussion.}
	\centering
	\begin{tabular}{c|ccccc}
		Method & Prop. I &  Prop. II (weak) & Prop. II (strong) & Prop. III & References (resp.) \\
		\hline
		US & \cmark & \cmark & \xmark & \xmark & Thms~\ref{thm:sqleaf},  \ref{thm:squanglesleaf}, \ref{thm:thesquangle} \\
		SS & \cmark & \cmark & \xmark & \cmark &  Thms~\ref{thm:sqleaf},  \ref{thm:squanglesleaf}, \ref{thm:thesquangle}, \ref{thm:sqexp} \\
		CSS & \cmark & \cmark & \cmark & \cmark & Thms~\ref{thm:sqleaf},  \ref{thm:squanglesleaf}, \ref{thm:thesquangle}, \ref{thm:sqexp}, \ref{thm:prop123}  \\
		UM & \cmark & \xmark & \xmark & \xmark & Sec~\ref{subsec:edgeprop1} \\
		SM & \cmark & \xmark & \xmark & \cmark & Sec~\ref{subsec:edgeprop1}, Tab~\ref{tab:signedminors}\\
		NJ & \cmark  & \cmark & \cmark & \cmark & Thm~\ref{thm:nj}
	\end{tabular}
	\label{tab:propsofmethods}
\end{table}

\subsection{Generation of simulated data}

All the simulations use a continuous time, symmetric Markov model.  
Edge length parameters correspond to the probability of a change along an edge (as opposed to the expected number of changes). Data were simulated on one of four types of tree: ``Felsenstein'', ``Farris'', ``balanced'', or ``unbalanced star'' (Figure~\ref{fig:felsfarris}). 
These trees were chosen as they have been widely studied in the literature concerning accuracy of different phylogenetic methods \cite{felsenstein1978,huelsenbeck1993success,swofford2001bias}. 
Many methods are known to have biases on these tree shapes either (i) negatively, towards getting an incorrect tree (for the Felsenstein shape), or (ii) positively, towards getting the correct tree (for the Farris shape). 

\tikzstyle{inner} = [ draw, circle, inner sep=0pt, minimum size=1pt, fill]
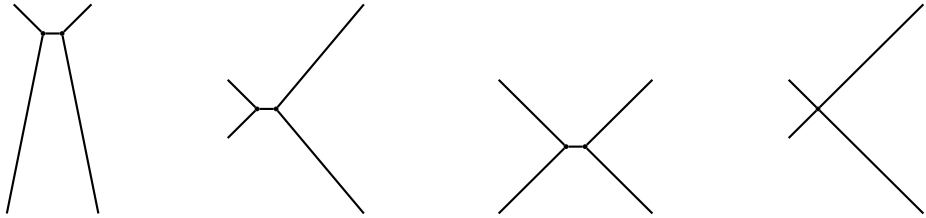
\begin{figure}[bt]
	\caption{Left to right: ``Felsenstein tree'', ``Farris tree'', ``balanced tree'', and ``unbalanced star''.}
	\centering
	\begin{tabular}{ccccccc}
		\begin{tikzpicture}[scale=.5,thick]
		\node(1) at (0,1) {};
		\node(2) at (0,-5) {};
		\node[inner](a) at (1,0) {};
		\node[inner](b) at (1.5,0) {};
		\node(3) at (2.5,1) {};
		\node(4) at (2.5,-5) {};
		\draw[-] (1) -- (a) -- (b) -- (4);
		\draw (2) -- (a);
		\draw (3) -- (b);
		\end{tikzpicture}
		
		&\phantom{how}&
		
		\begin{tikzpicture}[scale=.5,thick]
		\node(1) at (0,1) {};
		\node(2) at (0,-1) {};
		\node[inner](a) at (1,0) {};
		\node[inner](b) at (1.5,0) {};
		\node(3) at (4,3) {};
		\node(4) at (4,-3) {};
		\draw[-] (1) -- (a) -- (b) -- (4);
		\draw (2) -- (a);
		\draw (3) -- (b);
		\end{tikzpicture}  
		
		&\phantom{how}&
		
		\begin{tikzpicture}[scale=.5,thick]
		\node(1) at (0,2) {};
		\node(2) at (0,-2) {};
		\node[inner](a) at (2,0) {};
		\node[inner](b) at (2.5,0) {};
		\node(3) at (4.5,2) {};
		\node(4) at (4.5,-2) {};
		\draw[-] (1) -- (a) -- (b) -- (4);
		\draw (2) -- (a);
		\draw (3) -- (b);
		\end{tikzpicture} 
		
		&\phantom{how}&
		
		\begin{tikzpicture}[scale=.5,thick]
		\node(1) at (0,1) {};
		\node(2) at (0,-1) {};
		\node[inner](a) at (1,0) {};
		\node(3) at (4,3) {};
		\node(4) at (4,-3) {};
		\draw[-] (1) -- (a) -- (4);
		\draw (2) -- (a);
		\draw (3) -- (a);
		\end{tikzpicture}      
	\end{tabular}
	\label{fig:felsfarris}
\end{figure}

We conducted three different sets of simulations. For all scenarios we simulated 1000 trees for each parameter combination and recorded how many were correctly inferred.

The first set of simulations explored the effect of sequence length on accuracy of the different methods. 
Data were simulated on each type of tree.  Long branches had a 0.3 probability of a change and short branches had a 0.05 probability of a change. The sequence length was varied from 50 to 1600 in steps of 50. 

The second set of simulations explored the effect of internal branch length on accuracy of the different methods. 
Data were simulated on each type of tree excluding the star tree. 
Long pendant branches had a 0.3 probability of a change and short pendant branches had a 0.05 probability of a change. 
The internal branch length was varied from 0 to 0.1 in steps of 0.01. 
Sequence length was fixed at 800 characters. 

The third set of simulations solely focused on the ``Felsenstein'' tree and explored the effect of varying both the length of the two long branch lengths and the length of the three short branches. The short branch length was varied from 0.01 to 0.1 in steps of 0.01. The long branch length was varied from 0.1 to 0.4 in steps of 0.03. Sequence length was fixed at 400.

\subsection{Simulation results}

We present results for a subset of the methods and generating trees. 
The full simulation results, including heat maps, are available in Online Resource 3.

The results of the first set of simulations on the ``Felsenstein'' tree are presented in Figure ~\ref{fig:SL_Fel}.  The unsigned variants both perform poorly, while SM, SS, and NJ perform roughly equally well and CSS, the bias-corrected squangles, is the most accurate. 
The results on the ``Farris'' tree are presented in Figure~\ref{fig:SL_Far}. For this tree the unsigned variants are most accurate, particularly for shorter sequence lengths. UM is more than 80\% accurate even for sequence lengths of 50. 
SM, SS, and NJ perform roughly equally well and CSS, the bias-corrected squangles, is the least accurate.  
For the ``balanced tree'' (Figure~\ref{fig:SL_Bal}) all the signed methods and NJ performed about equally and were much more accurate than the unsigned methods and the methods based on singular value decomposition. 

The simulations on the ``unbalanced star'' tree  give us a more explicit opportunity to investigate the effect of (positive) bias inherent in the results for the Farris tree (Figure~\ref{fig:SL_Far}). 
If a method is unbiased it should have no preference for any one quartet (and hence return each quartet roughly 1/3 of the time). 
In order to investigate this, the number of times the tree which groups the two long edges together was recorded (Figure~\ref{fig:SL_UnbalStar}). 
UNJ is by far the most biased method.
This is followed by UM which returns the tree that pairs the two long branches over 80\% of the time, and US which returns this tree about 65\% of the time. 
SM, SS and NJ are all less biased but return the tree that pairs the two long branches about 40\% of the time. 
CSS is the only method tested that appears to be unbiased.
  
Overall the simulation scenarios, performance of our binary (two-state) implementations of ErikSVD and Erik+2 was relatively poor. 
This is in contrast to the excellent performance of this approach for four-state data reported in \cite{fernandez2015}. 
It is not immediately obvious why these methods are less effective on binary sequences.

\begin{figure}[ht]
\caption{Accuracy of nine different phylogenetic methods for data simulated under varying sequence lengths on the ``Felsenstein'' tree (short branch lengths 0.05 and long branch lengths 0.3).
The performance of NJ and SS is almost indistinguishable.}
\centering
\includegraphics[width = 1.0\textwidth]{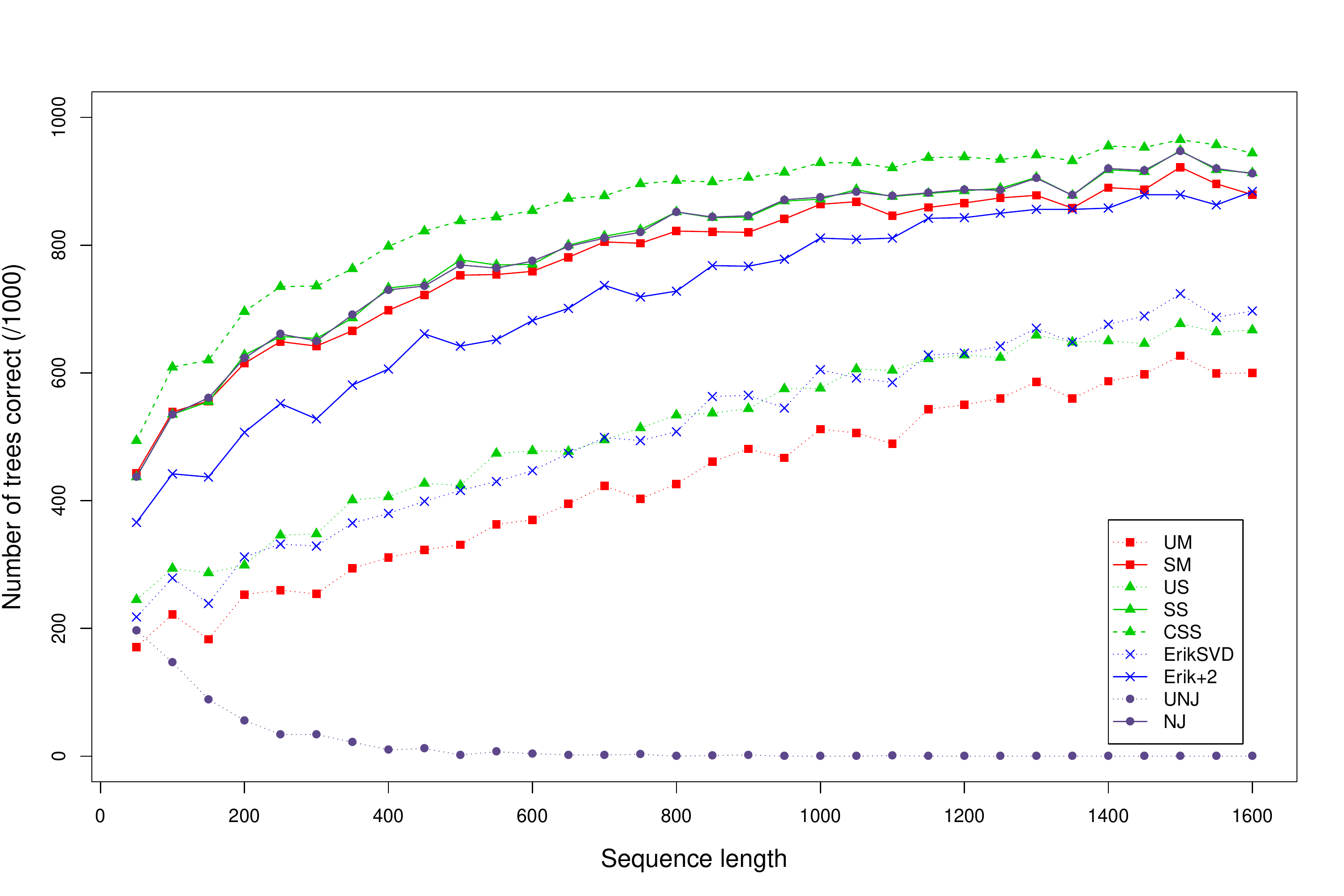}
\label{fig:SL_Fel}
\end{figure}

\begin{figure}[ht]
	\caption{Accuracy of nine different phylogenetic methods for data simulated under varying sequence lengths on the ``Farris'' tree (short branch lengths 0.05 and long branch lengths 0.3). 
Note that performance of NJ and SS is almost indistinguishable.
The high accuracy for some methods reflects bias towards inferring the correct tree (c.f. the results for the ``unbalanced star'' in Fig~\ref{fig:SL_UnbalStar}).}
\centering	
\includegraphics[width =  1.0\textwidth]{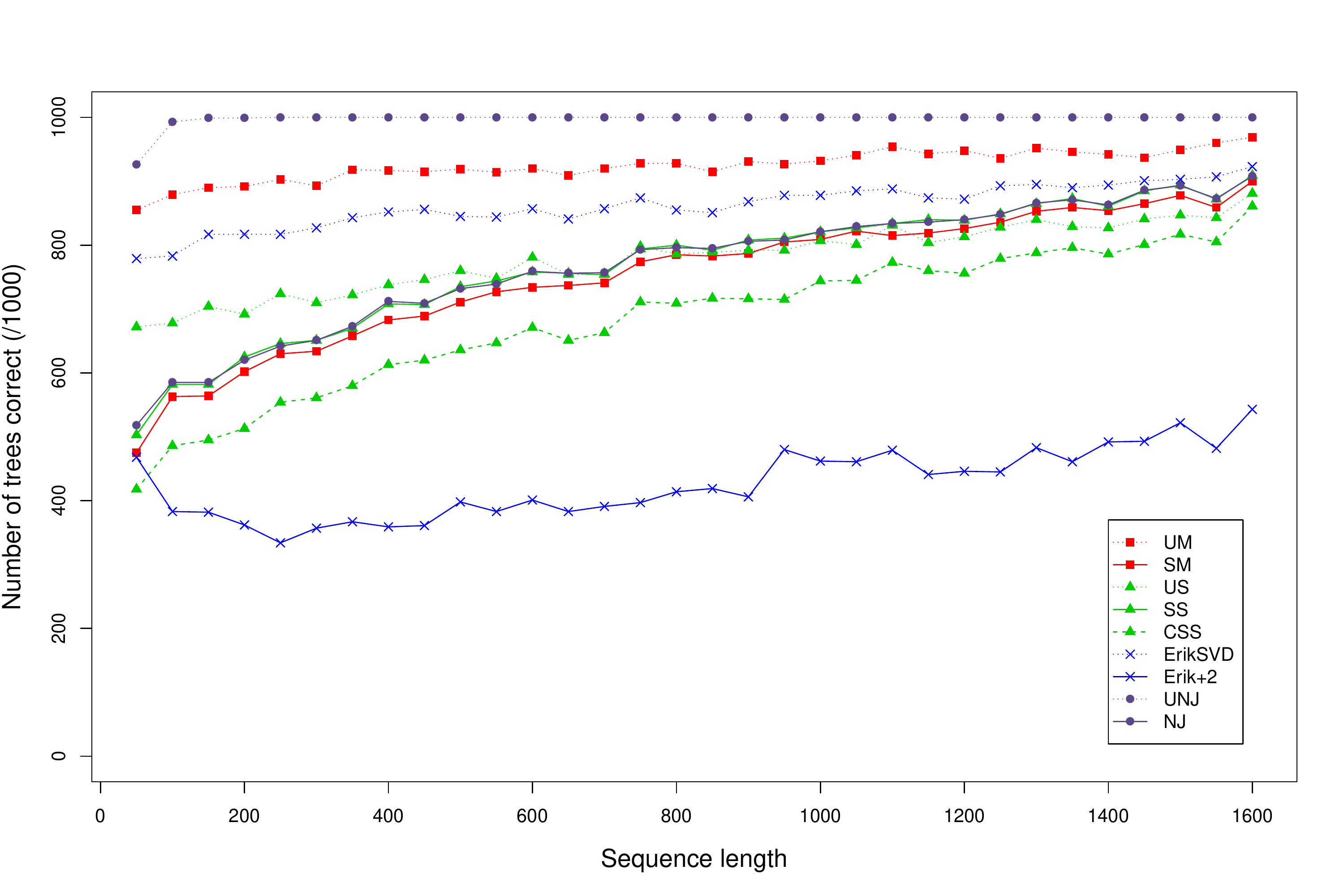}
\label{fig:SL_Far}
\end{figure}

\begin{figure}[ht]
	\caption{Accuracy of nine different phylogenetic methods for data simulated under varying sequence lengths on the balanced tree (internal branch length 0.05 and pendant branch lengths 0.3). Performance of the methods SM, CSS, SS, NJ and UNJ is almost indistinguishable and better than the performance of UM, US, ErikSVD and Erik+2.}
\centering	
\includegraphics[width =  1.0\textwidth]{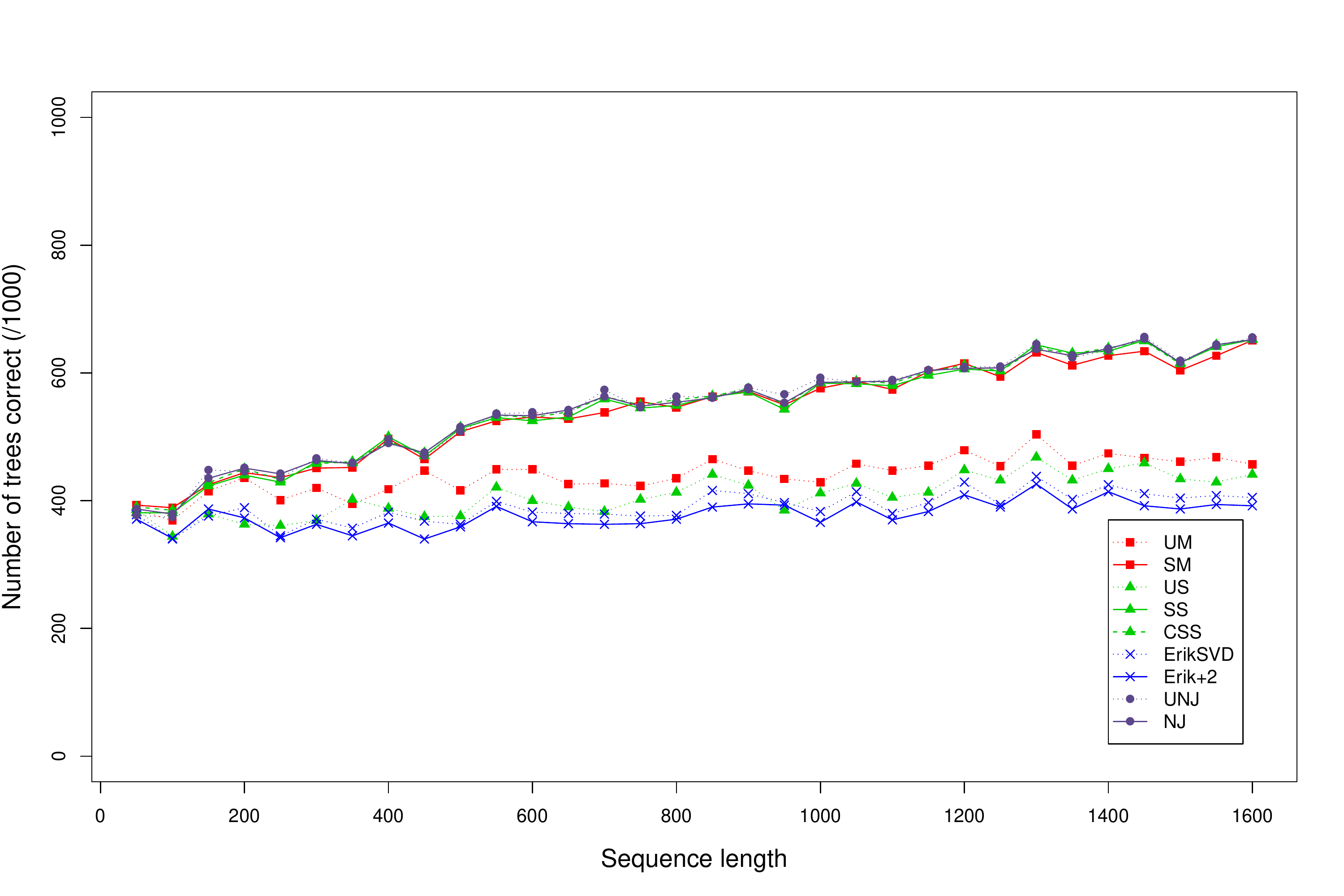}
\label{fig:SL_Bal}
\end{figure}

\begin{figure}[ht]
	\caption{	
	Performance of nine different phylogenetic methods for data simulated under varying sequence lengths on
the unbalanced star tree (short branch lengths 0.05 and long branch lengths 0.3). The dashed horizontal line at 333.3 indicates ideal performance of an unbiased method.
}
\centering	
\includegraphics[width = 1.0\textwidth]{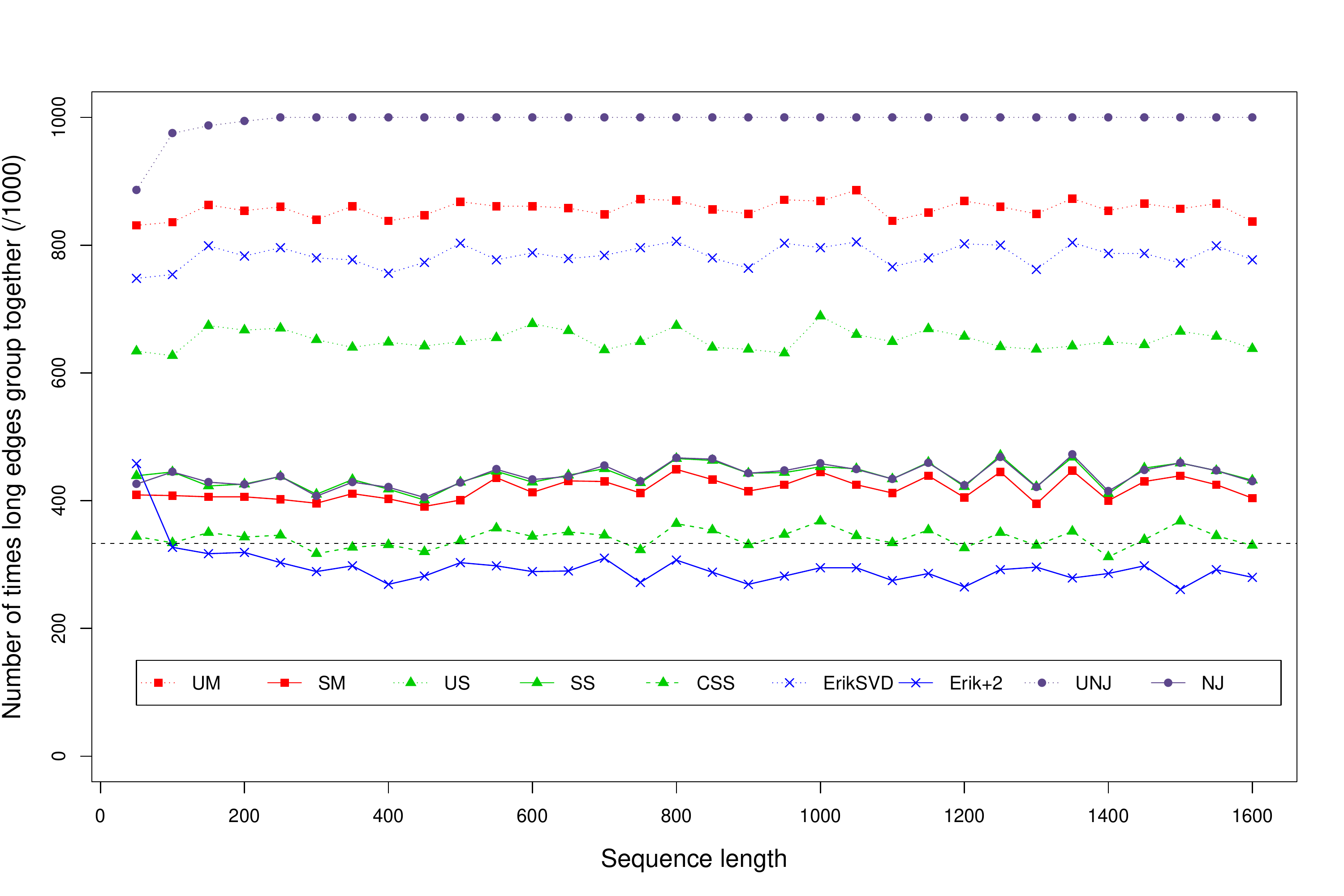}
\label{fig:SL_UnbalStar}
\end{figure}

The results of the second set of simulations on the ``Felsenstein'' tree are presented in Figure~\ref{fig:EL}. The unsigned variants both perform relatively poorly, SM, SS and NJ perform roughly equally well, and the CSS is the most accurate for all internal branch lengths tested.

\begin{figure}[ht]
\caption{Accuracy of nine different phylogenetic methods for varying internal edge lengths on the ``Felsenstein'' tree (short pendent branch lengths 0.05 and long pendent branch lengths 0.3). The internal branch length was varied from 0 to 0.1 in steps of 0.01. 
Sequence length was fixed at 800 characters. }
\centering
\includegraphics[width =  1.0\textwidth]{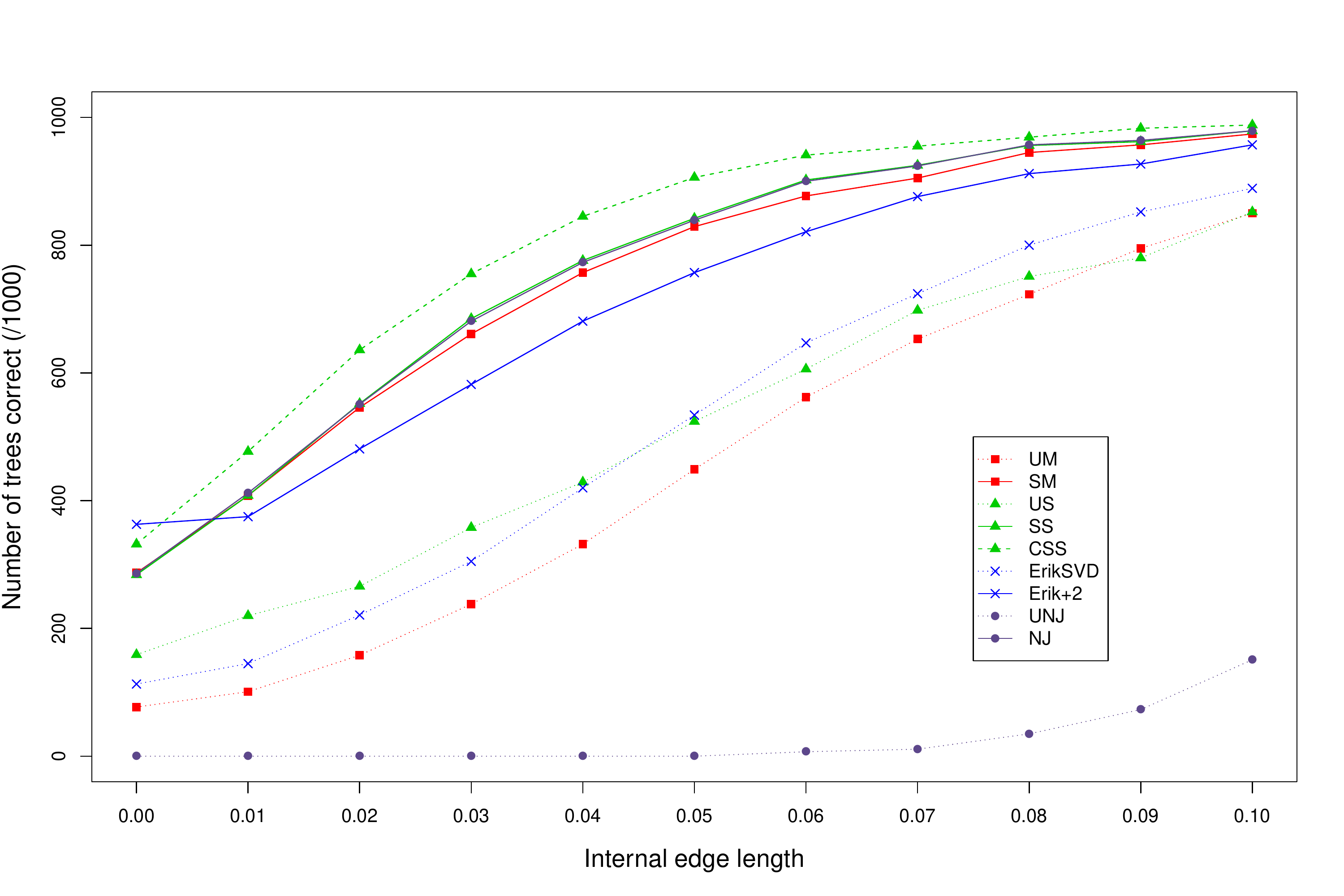}
\label{fig:EL}
\end{figure}
 
The results of the third set of simulations are presented as a series of heat maps in the Online Resource 3. 
Averaged over all the combinations of short and long branch lengths tested the methods ranked as follows for accuracy: CSS (81.4\%), NJ (77.4\%), SS (77.4\%),  SM (76.7\%), Erik+2 (71.8\%), US (56.1\%), ErikSVD (56.0\%), UM (52.1\%), UNJ (40.3\%).

Overall our results indicate that using a method (CSS) which has Properties I, II and III provides a highly accurate and unbiased method of quartet topology inference. The moderately improved performance of CSS relative to SS shows the benefit of correcting for the bias in the squangles, but that we can still reasonably use the squangles without this correction.  For more on the future of bias correcting methods based on Markov invariants beyond binary state models, see Section~\ref{sec:generalize}. 
%

\section{Discussion and future work}
\label{sec:generalize}

The above analysis focuses exclusively on the binary case $k\!=\!2$.
However, biologists are usually interested in studies where $k\!=\!4$, the DNA case.
Here we discuss the extension of the above results to $k\!=\!4$. 

\subsection{The minors} 
In the Appendix (Online Resource 4) we establish that, when $k\!=\!2$, the 48 minors (16 minors from each of the 3 flattenings) form a 32-dimensional invariant subspace under the action of $\text{GL}(4)\times \text{GL}(4)$ (expressed as left and right matrix multiplication). 
Further, in this scenario the binary squangles are elements of this invariant subspace and thus occur as certain linear combinations of the minors.  

For the DNA case of $k\!=\!4$, similar representation theoretic arguments (as given in the Appendix, Online Resource 4) establish that the invariant subspace formed from the minors of the flattenings (now degree 5 minors of $16\times 16$ matrices) \emph{does not contain any Markov invariants}.  
This happens because the rank conditions on flattenings are invariant under the action of $\text{GL}(16)\times \text{GL}(16)$ (again expressed as left and right matrix multiplication), whereas the Markov invariants are valid only under the two-step subgroup restriction: 
\begin{enumerate}
\item $\text{GL}(16)\times \text{GL}(16)$ to $\times^4\text{GL}(4)\equiv \left( \text{GL}(4)\times \text{GL}(4)\right)\times \left( \text{GL}(4)\times \text{GL}(4)\right)$;
\item each copy $\text{GL}(4)$ thereof to Markov matrices $\mathcal{M}_4$.
\end{enumerate}
For $k \!=\! 2$, it turns out there are so few possible invariant subspaces that the Markov invariants (binary squangles) happen to be in the subspace of polynomials spanned by the minors, i.e. they are linear combinations of the minors. 
For $k\!=\!4$, the minors and DNA squangles lie in distinct $\times^4 \text{GL}(4)$ invariant subspaces and it follows there is no linear combination of minors forming a Markov invariant (see the Appendix, Online Resource 4) and hence no chance a quartet inference measure formed from the minors can be made to satisfy Property II (strong).   

\begin{thm}
\label{thm:notminors}
The DNA squangles $(k\!=\!4)$ do not occur as linear combinations of minors of flattenings.
As a consequence, there is no quartet inference measure based on minors and edge identities satisfying Property II (weak or strong).
\end{thm}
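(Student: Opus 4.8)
\subsection*{Proof proposal}

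The plan is to lift both objects --- the DNA squangles and the $48$ edge identities --- into a single $\times^4\mathrm{GL}(4)$-module, determine the two submodules they span, and show these submodules share no constituent; in fact they share no $\times^4\mathcal{M}_4$-character. The ambient module is $R_5:=\mathrm{Sym}^5\bigl((\mathbb{C}^4)^{*\otimes 4}\bigr)$, the degree-$5$ part of the coordinate ring of $U=(\mathbb{C}^4)^{\otimes 4}$, carrying the natural action of $\times^4\mathrm{GL}(4)$. Both families live here: the DNA squangles are homogeneous of degree $5$, and the degree-$5$ minors of a $16\times 16$ flattening are exactly the generators cutting out the rank-$\le 4$ locus (a tree tensor on $T_i$ has flattening rank $\le 4$, the number of root states). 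Write $\mathcal{N}\subseteq R_5$ for the span of all $48$ minors and $\mathcal{Q}\subseteq R_5$ for the span of the squangles; the goal is $\mathcal{N}\cap\mathcal{Q}=0$, and more strongly that $\mathcal{N}$ contains no nonzero Markov invariant.

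First I would pin down the $\times^4\mathrm{GL}(4)$-structure of $\mathcal{N}$. A flattening $\alpha\beta|\gamma\delta$ identifies $U$ with $\mathbb{C}^{16}\otimes\mathbb{C}^{16}$, each $\mathbb{C}^{16}=\mathbb{C}^4\otimes\mathbb{C}^4$, and the span of the $5$-minors of a $16\times 16$ matrix is, as a $\mathrm{GL}(16)\times\mathrm{GL}(16)$-module, the single irreducible $\Lambda^5(\mathbb{C}^{16})^*\boxtimes\Lambda^5(\mathbb{C}^{16})^*$ --- this is precisely the remark in the text that the rank conditions are $\mathrm{GL}(16)\times\mathrm{GL}(16)$-invariant. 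Restricting each $\mathbb{C}^{16}=\mathbb{C}^4\otimes\mathbb{C}^4$ to $\mathrm{GL}(4)\times\mathrm{GL}(4)$ via the skew (dual) Cauchy identity gives
\[
\Lambda^5\bigl((\mathbb{C}^4)^*\otimes(\mathbb{C}^4)^*\bigr)=\bigoplus_{\nu}\,S_\nu\bigl((\mathbb{C}^4)^*\bigr)\boxtimes S_{\nu'}\bigl((\mathbb{C}^4)^*\bigr),
\]
where $S_\lambda$ is the Schur functor, $\nu'$ the conjugate partition, and $\nu$ runs over partitions of $5$ with both $\nu$ and $\nu'$ of at most $4$ rows, i.e. $\nu\in\{(4,1),(3,2),(3,1,1),(2,2,1),(2,1,1,1)\}$. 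So on each pair of taxon-factors the constituents of $\mathcal{N}$ have shapes of the form $(\nu,\nu')$ with $\nu$ in that list; the three flattenings only permute which pair is which.

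Next I would identify which $\mathrm{GL}(4)$-shapes $\lambda\vdash 5$ can carry a one-dimensional $\mathcal{M}_4$-representation upon restriction, since every Markov invariant is by definition a $\times^4\mathcal{M}_4$-semi-invariant and the characters of $\times^4\mathcal{M}_4$ are the products $\prod_{i}\det(M_i)^{n_i}$ (as $\mathcal{M}_4\cong\mathrm{GL}(3)\ltimes\mathbb{C}^3$ has reductive quotient $\mathrm{GL}(3)$, whose characters are determinant powers). Computing the branching $\mathrm{GL}(4)\!\downarrow\!\mathcal{M}_4$ --- equivalently, taking invariants under the unipotent radical of the parabolic stabilising the $\mathcal{M}_4$-fixed covector $\mathbb{1}^{T}$ --- one finds that $S_\lambda\bigl((\mathbb{C}^4)^*\bigr)|_{\mathcal{M}_4}$ contains a one-dimensional summand if and only if $\lambda_2=\lambda_3=\lambda_4$, and among $\lambda\vdash 5$ the only such shapes are $\lambda=(5)$ (trivial character) and $\lambda=(2,1,1,1)$ (character $\det^{-1}$); in particular the tree-informative squangle lives in the $(2,1,1,1)^{\boxtimes 4}$-isotypic component, while the trivial-character invariants are the non-informative powers of $\sum_{ijkl}p_{ijkl}$. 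The comparison is now immediate: for a constituent of $\mathcal{N}$ to be a $\times^4\mathcal{M}_4$-semi-invariant it would need, on each taxon-pair, both $\nu$ and $\nu'$ in $\{(5),(2,1,1,1)\}$; but $(5)'=(1^5)$ and $(2,1,1,1)'=(4,1)$, neither of which is in that set. Hence $\mathcal{N}$ carries no $\times^4\mathcal{M}_4$-character at all, so $\mathcal{N}\cap\mathcal{Q}=0$ and $\mathcal{N}$ contains no Markov invariant. This also makes transparent why $k=2$ is special: there one is in degree $3$ with the shape $(2,1)$, which \emph{is} self-conjugate, so the binary squangle does land in the minor span.

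The main obstacle is the branching computation $\mathrm{GL}(4)\!\downarrow\!\mathcal{M}_4$ together with the bookkeeping of conjugate pairs $(\nu,\nu')$ in the skew-Cauchy expansion; this is where the care lies, and it is cleanest done with explicit group characters (traces on the torus of $\mathcal{M}_4$), as carried out in the Appendix. One technical point worth flagging: a squangle is a semi-invariant only for $\times^4\mathcal{M}_4$, not for $\times^4\mathrm{GL}(4)$, so $\mathcal{Q}$ need not be $\times^4\mathrm{GL}(4)$-stable; but $\mathcal{N}$ \emph{is}, so membership of a $\times^4\mathcal{M}_4$-semi-invariant in $\mathcal{N}$ can be detected by projecting onto the $\times^4\mathrm{GL}(4)$-isotypic pieces meeting $\mathcal{N}$ and inspecting their $\times^4\mathcal{M}_4$-content, which is exactly what the argument above does. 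Finally, for the consequence on Property II: a quartet measure built from minors and edge identities has each $R_i$ a polynomial in the minors, and (weak or strong) Property II forces the leading homogeneous part of $E[R_i]$ --- a polynomial in the minors of $P$ --- to transform as a character under the Markov action $P\mapsto g\cdot P$, hence, in its lowest degree, to be a linear combination of minors that is itself a Markov invariant. By the above there is no such nonzero combination, so that leading part is constant and the measure cannot discriminate the three quartets; thus no minors-based measure satisfies Property II.
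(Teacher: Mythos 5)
Your proposal is correct and follows essentially the same route as the paper: you identify the span of the degree-$5$ minors with the $\langle 1^5,1^5\rangle$ module of $\mathrm{GL}(16)\times\mathrm{GL}(16)$, restrict to $\times^4\mathrm{GL}(4)$, and invoke the Markov invariants lemma (admissible shapes $(r+s,r^3)$, here only $(5)$ and $(2,1^3)$) to conclude that no constituent of the minor span can carry a $\times^4\mathcal{M}_4$ character, exactly as in the Appendix. The only difference is cosmetic: you obtain the branching in closed form via the dual Cauchy identity and the observation that neither admissible shape is conjugate to an admissible shape (which also neatly explains the $k=2$ exception), whereas the paper records the same five constituents computed with the Schur software; your handling of the "consequence" clause is about as informal as the paper's own.
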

\begin{proof}
See Appendix, Online Resource 4.
\end{proof}

On the other hand, for any $k$ the minors will continue to transform as a signed permutation representation of the relevant stabilizer subgroups so it is no problem to ensure Property I by a suitable choice of measure (any sum of squares of an orbit under the stabilizer subgroup will do).

Additionally, assuming that the relevant Markov matrices are sufficiently close to the identity matrix, similar arguments to those given in Section~\ref{sec:edgesigns} can be made to determine the signs for the edge identities on quartets for any $k$ (we leave the details of this for future work).


\subsection{The squangles} 
As described in~\cite{holland2012}, the theory we presented here to build the basic residual sum of squares rule for $k\!=\! 2$ extends to $k\! =\! 4$ to provide a signed residual sum of squares rule for DNA data. 
However, this derivation does not include the computation of the unbiased forms $S_i$ of the squares of the DNA squangles.
While the representation-theoretic arguments for existence of the Markov invariants are similar, their construction is more complicated and there is no known way to compute them as a minor of a transformed flattening.
We emphasize that the representation theory showing our tree measure has Property II (strong) extends to showing there is such a measure for $k\! =\! 4$, as well as the behaviour under taxon permutations ensuring Property I.  

An important addition to this paper over~\cite{holland2012} is our discussion of unbiased estimators of the parameters involved in the decision rule as discussed in Section~\ref{sec:quartetInference}. 
Since the binary squangles have relatively few terms, computing the unbiased forms of their squares is feasible by explicit squaring and bias correcting term by term.
In particular, the binary squangles are cubic and square-free and thus we know that the only correction we need to make is for squared variables. 
However, for $k\! =\! 4$ the DNA squangles are degree 5 polynomials in 256 variables with 66,744 terms each.
Additionally, and most importantly, their explicit polynomial form is only known in a non-standard basis analogous to the change of basis used at the start of Section~\ref{sec:squangles} (details are given in \cite{sumner2009}).  
Given that these polynomials would need to be squared and transformed to the natural (probability) basis, we consider the development of an unbiased square of the squangles for $k \! =\! 4$ a challenging open problem.  

\vspace{1em}
\noindent\textbf{Open Problem}: Compute unbiased forms for the DNA squangles.
\vspace{1em}

We close by pointing out that it is easy to argue that the SVD approach satisfies Property I but certainly does not satisfy Property II and it is not at all clear how to construct a correspondingly unbiased version of the SVD approach.

\subsection*{Funding}
This work was supported by the Australian Research Council Discovery Early Career Fellowship DE130100423 (JGS) and the University of Tasmania Visiting Scholars Program (AT).

\subsection*{Acknowledgement}
We would like to thank the two anonymous reviewers whose thoughtful and careful reading of our manuscript led to a greatly improved final version.

\bibliographystyle{plain}
\bibliography{master}

\begin{thebibliography}{10}

\bibitem{allman2003}
E.~S. Allman and J.~A. Rhodes.
\newblock Phylogenetic invariants of the general {Markov} model of sequence
  mutation.
\newblock {\em Math. Biosci.}, 186:113--144, 2003.

\bibitem{allman2008}
E.~S. Allman and J.~A. Rhodes.
\newblock Phylogenetic ideals and varieties for the general {M}arkov model.
\newblock {\em Adv. Appl. Math.}, 40:127--148, 2008.

\bibitem{allman2014semialgebraic}
E.~S. Allman, J.~A. Rhodes, and A.~Taylor.
\newblock A semialgebraic description of the general markov model on
  phylogenetic trees.
\newblock {\em SIAM Journal on Discrete Mathematics}, 28(2):736--755, 2014.

\bibitem{bates2011toward}
D.~J. Bates and L.~Oeding.
\newblock Toward a salmon conjecture.
\newblock {\em Experimental Mathematics}, 20(3):358--370, 2011.

\bibitem{casanellas2010}
M.~Casanellas and J.~Fern\'andez-S\'anchez.
\newblock Relevant phylogenetic invariants of evolutionary models.
\newblock {\em J. Math. Pures Appl.}, 96:207--229, 2010.

\bibitem{cavender1987}
J.~A. Cavender and J.~Felsenstein.
\newblock Invariants of phylogenies in a simple case with discrete states.
\newblock {\em J. Class.}, 4:57--71, 1987.

\bibitem{chor2000}
B.~Chor, M.l~D. Hendy, B.~R. Holland, and D.~Penny.
\newblock Multiple maxima of likelihood in phylogenetic trees: An analytic
  approach.
\newblock {\em Mol. Biol. Evol.}, 17:1529--1541, 2000.

\bibitem{davey2011genome}
John~W. Davey, Paul~A. Hohenlohe, Paul~D. Etter, Jason~Q. Boone, Julian~M.
  Catchen, and Mark~L. Blaxter.
\newblock Genome-wide genetic marker discovery and genotyping using
  next-generation sequencing.
\newblock {\em Nat. Rev. Genet.}, 12(7):499--510, 2011.

\bibitem{draisma2008}
J.~Draisma and J.~Kuttler.
\newblock On the ideals of equivariant tree models.
\newblock {\em Mathematische Annalen}, 344:619--644, 2008.

\bibitem{eriksson2008}
N.~Eriksson.
\newblock Using invariants for phylogenetic tree construction.
\newblock In Mihai Putinar and Seth Sullivant, editors, {\em Emerging
  Applications of Algebraic Geometry}. Springer, 2008.

\bibitem{felsenstein1978}
J.~Felsenstein.
\newblock Cases in which parsimony or compatibility methods will be positively
  misleading.
\newblock {\em Syst. Zool.}, 27:401--410, 1978.

\bibitem{felsenstein1981}
J.~Felsenstein.
\newblock Evolutionary trees from {DNA} sequences: a maximum likelihood
  approach.
\newblock {\em J. Mol. Evol.}, 17:368--376, 1981.

\bibitem{felsenstein2004}
J.~Felsenstein.
\newblock {\em Inferring Phylogenies}.
\newblock Sinauer Associates, Sunderland, 2004.

\bibitem{fernandez2015}
J.~Fern\'andez-S\'anchez and M.~Casanellas.
\newblock Invariant versus classical quartet inference when evolution is
  heterogeneous across sites and lineages.
\newblock {\em Syst. Biol.}, 65(2):280--291, 2015.

\bibitem{friedland2013tensors}
S.~Friedland.
\newblock On tensors of border rank l in $\mathbb{C}^{m\times n\times l}$.
\newblock {\em Linear Algebra and its Applications}, 438(2):713--737, 2013.

\bibitem{friedland2012proof}
S.~Friedland and E.~Gross.
\newblock A proof of the set-theoretic version of the salmon conjecture.
\newblock {\em Journal of Algebra}, 356(1):374--379, 2012.

\bibitem{hillis1994}
D.~Hillis, J.~Huelsenbeck, and D.~Swofford.
\newblock Hobgoblin of phylogenetics?
\newblock {\em Nature}, 369:363--364, 1994.

\bibitem{holland2012}
B.~R. Holland, J.~G. Sumner, and P.~D. Jarvis.
\newblock {Low-Parameter Phylogenetic Inference Under the General Markov
  Model}.
\newblock {\em Syst. Biol.}, 62:78--92, 2013.

\bibitem{huelsenbeck1993success}
J.~P. Huelsenbeck and D.~M. Hillis.
\newblock Success of phylogenetic methods in the four-taxon case.
\newblock {\em Systematic Biology}, 42(3):247--264, 1993.

\bibitem{jarvis2014adventures}
P.~D. Jarvis and J.~G. Sumner.
\newblock Adventures in invariant theory.
\newblock {\em The ANZIAM Journal}, 56(02):105--115, 2014.

\bibitem{lake1987}
J.~A. Lake.
\newblock A rate-independent technique for analysis of nucleic acid sequences:
  evolutionary parsimony.
\newblock {\em Mol. Biol. Evol.}, 4:167--191, 1987.

\bibitem{lemmon2013high}
Emily~Moriarty Lemmon and Alan~R. Lemmon.
\newblock High-throughput genomic data in systematics and phylogenetics.
\newblock {\em Annu. Rev. Ecol. Evol. Syst.}, 44.

\bibitem{olver2003}
P.~J. Olver.
\newblock {\em Classical Invariant Theory}.
\newblock Cambridge University Press, Cambridge, 2003.

\bibitem{rusinko2011}
J.~P. Rusinko and B.~Hipp.
\newblock Invariant based quartet puzzling.
\newblock {\em Algorithms for Molecular Biology}, 7(1):1, 2012.

\bibitem{saitou1987}
N.~Saitou and M.~Nei.
\newblock The neighbor-joining method: a new method for reconstructing
  phylogenetic trees.
\newblock {\em Mol. Biol. Evol.}, 4(4):406--425, 1987.

\bibitem{sumner2016dimensional}
J.~G. Sumner.
\newblock Dimensional reduction for phylogenetic tree models.
\newblock {\em preprint arXiv:1602.07780}, 2016.

\bibitem{sumner2008}
J.~G. Sumner, M.~A. Charleston, L.~S. Jermiin, and P.~D. Jarvis.
\newblock Markov invariants, plethysms, and phylogenetics.
\newblock {\em J. Theor. Biol.}, 253:601--615, 2008.

\bibitem{sumner2011}
J.~G. Sumner, J.~Fern\'andez-S\'anchez, and P.~D. Jarvis.
\newblock Lie {M}arkov models.
\newblock {\em J. Theor. Biol.}, 298:16--31, 2012.

\bibitem{sumner2009}
J.~G. Sumner and P.~D. Jarvis.
\newblock Markov invariants and the isotropy subgroup of a quartet tree.
\newblock {\em J. Theor. Biol.}, 258:302--310, 2009.

\bibitem{swofford2001bias}
David~L Swofford, Peter~J Waddell, John~P Huelsenbeck, Peter~G Foster, Paul~O
  Lewis, and James~S Rogers.
\newblock Bias in phylogenetic estimation and its relevance to the choice
  between parsimony and likelihood methods.
\newblock {\em Systematic biology}, 50(4):525--539, 2001.

\bibitem{mathematica}
{Wolfram Research, Inc.}
\newblock {\em {Mathematica 8}}.
\newblock {Wolfram Research, Inc.}, Champaign, Illinois, 2010.

\end{thebibliography}

\end{document}